\def\R{\mathbb R}
\def\<{\langle}
\def\>{\rangle}
\def\sm{\smallskip}
\def\bul{$\bullet$\ }
\def\begeq{\begin{equation}}
\def\endeq{\end{equation}}
\def\begar{\begin{eqnarray}}
\def\endar{\end{eqnarray}}
\def\begar*{\begin{eqnarray*}}
\def\endar*{\end{eqnarray*}}
\def\begal{\begin{align}}
\def\endal{\end{align}}
\def\begal*{\begin{align*}}
\def\endal*{\end{align*}}
\newtheorem{Thm}{Theorem}
\newtheorem{Prop}[Thm]{Proposition}
\numberwithin{equation}{section}
\numberwithin{Thm}{section}
\theoremstyle{definition}
\newtheorem{Rk}[Thm]{Remark}
\theoremstyle{remark}
\newtheorem*{Thm*}{Theorem}
\newtheorem*{Lem*}{Lemma}
\newtheorem*{Conj*}{Conjecture}
\newtheorem*{Cor*}{Corollary}
\newtheorem*{Def*}{Definition}
\newtheorem*{Prop*}{Proposition}
\newtheorem*{Exo*}{Exercise}
\newtheorem*{Exs*}{Examples}
\newtheorem*{Ex*}{Example}
\newtheorem*{Rk*}{Remark}
\newtheorem*{Rks*}{Remarks}
\def\signar{\bigskip \begin{center} {\sc \'Alvaro I. Riquelme\par\vspace{3mm}
Queen's University \par
The Robert M. Buchan Department of Mining\par
CANADA\par\vspace{3mm}
e-mail:} \tt{alvaro.riquelme@queensu.ca} \end{center}}
\def\signjo{\bigskip \begin{center} {\sc Juli\'an M. Ortiz\par\vspace{3mm}
			Queen's University \par
			The Robert M. Buchan Department of Mining\par
			CANADA\par\vspace{3mm}
			e-mail:} \tt{ julian.ortiz@queensu.ca} \end{center}}
\begin{document}

\title[A general approach to the assessment of uncertainty in volumes]{A general approach to the assessment of uncertainty in volumes by using the multi-Gaussian model \\[1mm]
  {\today}}

\vspace*{-8mm}

\author{AI Riquelme}
\author{JM Ortiz}

\begin{abstract}
The goal of this research is to derive an approach to assess uncertainty in an arbitrary volume conditioned by sampling data, without using geostatistical simulation. We have accomplished this goal by deriving an numerical tool suitable for any probabilistic distribution of the sample data. For this, we have worked with an extension of the traditional multi-Gaussian model, allowing us to obtain a formulation that makes explicit the dependence of the uncertainty in the arbitrary volume from the grades within the volume, the spatial correlation of the data and the conditioning values. A Kriging of the Gaussian values is the only requirement to obtain not only conditional local means and variances but also the complete local distributions at any support, in an easy and straightforward way.
\end{abstract}

\maketitle

\vspace*{-13mm}

\tableofcontents

\vspace*{-11mm}

{\bf Keywords.} Geostatistics, conditional expectation, conditional variance, simulations, multi-gaussian model, Hermite polynomials.

\bigskip

The Multi-Gaussian model is a very simple non linear technique used in the context of spatial estimation of variables, introduced by Matheron in the 70's \cite{Disjonctif,Matheronnato,matheron1974fonctions}. It solves many classic problems regarding spatial estimation, when the linear estimation does not provide a good solution.

One of the most successful applications of the model has relationship with the simulation of ore deposits, producing different plausible scenarios, conditioned to a set of samples (i.e. matching the data in their positions and values), reproducing their spatial variability. These scenarios are post-processed, using non-linear transfer functions (usually related with the application of cutoff grades in open-pit design, mine planning and exploration) in order to assess the impact of the geological variability on mineral resources and reserves.

The multi-Gaussian model also helps obtaining  estimators of local conditional expectations and variances, at point and block support, including arbitrary volumes \cite{Matheronnato,Ortiz2005,Mar2,Rendu1980}. The discrete Gaussian model \cite{Forecasting,Emery2005a} can also be used to provide the change of support distribution locally, however it does not consider the proportional effect or the blending when an arbitrary number of different zones is considered.

The goal of this research is related with this last problem. Simulations allow to address straightforwardly the probability distribution of variables over any volume, just by changing the support of the outcomes over different scenarios. However, its result depend on the number of realizations and, therefore, provides an inexact result.

We present next the steps to close this gap by analyzing the inference of the mean estimator, then the variances and finally the complete conditional distributions, at point support and when change of support is considered. This is done considering both the conditioning effect of sample locations and the proportional effect, without resorting to the use of simulations.

The plan of the paper is as follows. In Section \ref{sec:intro} we provide an introduction related with the history and the range of validity of the model. Then in Section \ref{sec:main}, we comment on our procedure to obtain the variance for any support. In Section \ref{sec:MG}, we formally establish the multi-Gaussian model. After the statement, it is rather easy to derive point $n$-variate probability distributions. Some examples are presented at the end of that section, of the application of the model and its comparison with standard simulation results. In Section \ref{sec:ChangeSup}, we propose our methodology for change of support with an example, providing a general and simple approach. Finally, Section \ref{sec:Her} provides an introduction and classic results of Hermite Polynomials. Hermite Polynomials are highly suitable to be incorporated into the Gaussian model. This section serves as a compilation of proofs to the reader and can be read independently of the rest. An elegant result of the application of Hermite Polynomials is presented in Section \ref{sec:Application}. Final comments of these results and future work are provided in Section \ref{sec:Conclusion}.

\section{Introduction to the Multi-Gaussian Model}
\label{sec:intro}

\subsection{Historic perspective}

The multi-Gaussian model was firstly introduced in Geostatistics by Matheron in the 70's \cite{matheron1974fonctions} under the name of Disjunctive Kriging, as a way of directly estimating non-linear functions of stationary grade values. Under this multivariate Gaussian framework, the derivation of conditional distributions, and hence of estimates for recovery functions (grade and tonnage above a cutoff), is particularly straightforward.

Several publications, since then, can be found in the literature which present the theory and applications of disjunctive kriging \cite{marechal1,journel1978mining,Matheronnato}. The problem considered is the  estimation of the attribute $Z$ in a volume $\boldsymbol{V}$ (spatially discretized in an regular way by points $\textbf{u}_{i}, i \in \{1,\dots,|\boldsymbol{V}|\}$), from $n$ point samples with values $z(\textbf{u}_{\alpha}),\alpha \in \{1,\dots,n\}$, located at points $\textbf{u}_{\alpha},\alpha \in \{1,\dots,n\}$, inside or outside the volume $\boldsymbol{V}$. If $Z^*_{\boldsymbol{V}}$ is the estimator for the unknown value of the volume, it is usually desirable that this estimator presents properties such as unbiasedness, conditional unbiasedness and minimum error variance. Different estimators possess these properties, such as kriging, suitable if the sample values are normally distributed, and logarithmic kriging, when lognormality prevails. However, estimators derived from the multi-Gaussian model have the advantage of not only having these properties, but also of being distribution free, i.e., the model can be used whatever the multivariate probability distribution of the sample values \cite{Rendu1980}.

\subsection{Notion}
The idea of the multivariate Gaussian approach is to transform the initial random function $Z(\textbf{u})$, into a random function $Y(\textbf{u})$ with a standard Gaussian univariate distribution. Then, under the working hypothesis that $Y(\textbf{u})$ is spatially multi-Gaussian, the conditional distributions of $Y(\textbf{u})$ are determined. Derivation of conditional expectation in a non-linear way, and estimates of recovery functions, are then obtained through inverse transformation from these $Y(\textbf{u})$ conditional distributions.

In practice, the non-linear estimator provided by this procedure differs from the true conditional expectation. However, and under the condition that the univariate distribution of $Z(\textbf{u})$ is well known, practice has shown that the mean estimator of this non-linear technique, $[Z^*(\textbf{u})]$, is generally better than the estimators provided by the direct linear kriging processes applied to the initial data (see \cite{rendu1979normal,krige1976some,krige1951statistical}).
Some caution, however, should be taken with very continuous variables, which may result in the overestimation of areas around high-valued samples and, consequently, in a dangerous bias if used for selecting mining blocks \cite{Rivo}.

\subsection{Range of validity}
The multi-Gaussian approach is very convenient: the inference of the conditional distribution function (cdf) reduces to solving a simple kriging system at location $\textbf{u}$. The trade-off cost is the assumption that data follow a multi-Gaussian distribution, which implies first that the one-point distribution of data (sample histogram) is normal.

Many variables in the earth sciences show an asymmetric distribution with a few very large values (positive skewness). Thus, the multi-Gaussian approach starts with an identification of the standard distribution and involves the ``normalizing'' of the skewed sample histogram. The common approach for this normalization is to apply a normal score transformation. This procedure will be described in detail in the following sections.

However, the normality of the one-point cdf is a necessary but not sufficient condition to ensure that the random function model is multivariate Gaussian. Hence one should check whether the normalized data are also reasonably bivariate Gaussian. There are several ways to check that the two-point distribution of data $y(\textbf{u}_{\alpha}),\alpha \in \{1,\dots,n(\textbf{u})\}$, is bivariate normal. For a detailed
exposition, and many explanations, the reader can check \cite{goovaerts1997geostatistics,Emery2005b}.

\subsection{Conceptual problems}
The normality of one-point and two-point cdfs are both necessary but not sufficient conditions to ensure that a multivariate Gaussian random function model is appropriate for modeling the spatial distribution of normal score data. One should also check the normality of the three-point until the $N$-point cdfs. One type of check consist of comparing experimental multiple-point frequencies with their theoretical Gaussian values through the comparison with analytical expressions. Although such an expression has been established, the main difficulty resides in the inference of such experimental frequencies. For example, the inference of a three-point frequency such that $$P(Y(\textbf{u})\leq y_p,Y(\textbf{u}')\leq y_p,Y(\textbf{u}'')\leq y_p) $$ requires the availability of a series of triplet values with the same geometric configuration as the triplet $(\textbf{u},\textbf{u},\textbf{u}'')$.

Non-regular drilling grids and data sparsity prevent us from computing sample statistics involving more than two locations at a time. Therefore, in practice, because most of testing becomes tedious, if $h$-scatter plots do not invalidate the bi-Gaussian assumption, the multi-Gaussian formalism is adopted.

\section{Main result}
\label{sec:main}

Our principal concern, when conceiving a tool capable of the modeling spatial uncertainty at any support, is to take into consideration, on one hand, the spatial location of the samples in a way that the uncertainty is reduced when we approach to the samples and, on the other hand, to reproduce the heteroscedasticity that we find in real deposits, i.e., to model the zones of higher values with higher variance. This heteroscedastic behavior is commonly referred to as the proportional effect.

The main development, presented in what follows, has a rather simply but powerful origin, and is based in the following variance calculation for an average of values within a volume:

\begin{align}\label{covav}
\mathbb{V}ar(Z_{\mathit{\boldsymbol{V_{k}}}}|Z(\textbf{u}_{\alpha})=z(\textbf{u}_{\alpha}),\alpha \in \{1,\dots,n(\textbf{k})\})=\displaystyle \frac{1}{|\boldsymbol{V_k}|^2}\sum_{\substack{i:\textbf{u}_{i} \in\boldsymbol{V_k}\\j:\textbf{u}_{j}\in\boldsymbol{V_k}}}\mathbb{C}ov(Z^*(\textbf{u}_{i}),Z^*(\textbf{u}_{j})),
\end{align}
with $|\boldsymbol{V_k}|$ the amounts of points to average in  $\boldsymbol{V_k}$. This result allows us to compute the variance of a volume in a simple way, by taking an ``average'' of the covariance between pairs of points within the volume.

The key step, that follows to this result, is to consider this covariance between the pairs of points $\mathbb{C}ov(Z^*(\textbf{u}_{i}),Z^*(\textbf{u}_{j}))$, without the assumption of stationarity in the original values, but in their respective Gaussian values $\mathbb{C}ov(Y^*(\textbf{u}_{i}),Y^*(\textbf{u}_{j}))$. Then, the covariance in the original values is retrieved as a non-linear function of the covariance  $\mathbb{C}ov(Y^*(\textbf{u}_{i}),Y^*(\textbf{u}_{j}))=\mathbb{C}ov(\textbf{u}_{i}-\textbf{u}_{j})$ and also a function of the conditioning sample values. This non-linear function will be derived in a straightforward way once the multi-Gaussian model is presented.

\section{The Multi-Gaussian Model}
\label{sec:MG}

In the multi-Gaussian model, the attribute under study is viewed as a realization of a
random function $Z(\textbf{u})$ that can be transformed into a Gaussian random field $Y(\textbf{u})$ with
zero mean and unit variance, in a certain position $\textbf{u}$. This transformation is a quantile transformation (\textbf{u} will be implicit most of the time hereafter):
\begin{equation}
F_{Z}(z)=F_{Y}(y),\label{qt}
\end{equation}
where $F_{Z}(z)=P(Z\leq z)$ and $F_{Y}(y)=P(Y\leq y)$, commonly written as $G(y)$ (Figure \ref{GAn}). Hence,
$$Z(\textbf{u})=\phi[Y(\textbf{u})]$$
or just $z=\phi(y)$. Lets define  the cumulative density function (cdf) of a Gaussian random function  with mean $\mu$ and variance $\sigma^2$, as $G_{\mu}^{\sigma^2}$, except for the standard case $G_0^1$, written just as $G$. Then $\phi = F^{-1}_Z\circ G$, in shortened notation, is the anamorphosis function.

\begin{figure}[h]
	\begin{center}
		\includegraphics[width=14cm]{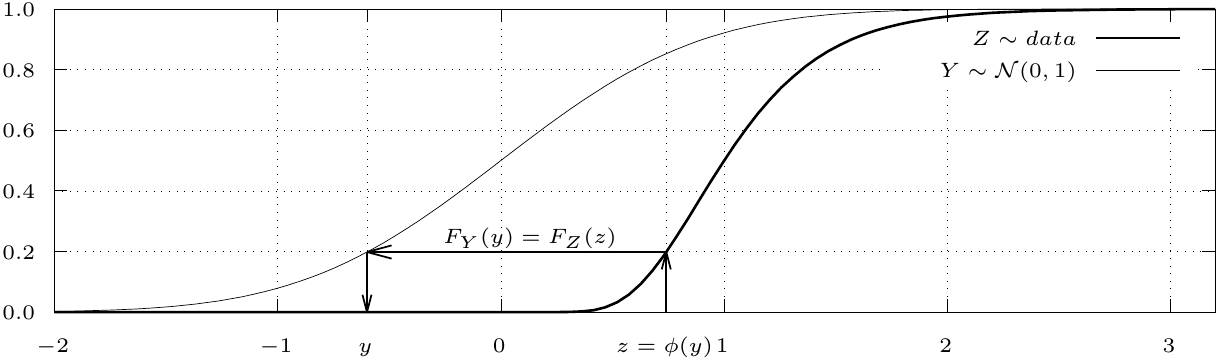}
	\end{center}
	\caption{Graphical representation of the Gaussian anamorphosis.}
	\label{GAn}
\end{figure}

We notice that the quantile transformation (\ref{qt}), written in terms of the probability density and the anamorphosis function, is equivalent to:
$$\int_{-\infty}^{z} f_Z(\textbf{z})d\textbf{z} = \int_{-\infty}^{\phi(y)} f_Z(\textbf{z})d\textbf{z} = \int_{-\infty}^{y} f_Y(\textbf{y})d\textbf{y},$$
leading to:
$$f_Z(z) = \frac{f_Y(y)}{\phi^{'}(y)}=\frac{g(y)}{\phi^{'}(y)},$$
where $\displaystyle \phi^{'}(y)=\frac{d\phi(y)}{dy}$ and $f_Y(y)$ is the probability density function (pdf) of a standard Gaussian distribution, commonly written as $g(y)$. We will define, in general, $g_{\mu}^{\sigma^2}$ as the pdf of a Gaussian random function with mean $\mu$ and variance $\sigma^2$ (except for the standard case):
\begin{equation}
g(y)=\frac{1}{\sqrt{2\pi}}\cdot e^{-\frac{y^2}{2}}, \qquad g_{\mu}^{\sigma^2}(y)=\frac{1}{\sigma \sqrt{2\pi}}\cdot e^{-\frac{(y-\mu)^2}{2\sigma^2}}.
\label{Gauss} 
\end{equation}
Knowing the anamorphosis of a variable $Z(\textbf{u})$ is equivalent to knowing its distribution.

The multi-Gaussian assumption states that the distribution of any value  $Y(\textbf{u})$
conditioned by the sampled values is still Gaussian, with a mean and a variance equal to its simple kriging estimate $y^*_{SK}(\textbf{u})$ and simple kriging variance $\sigma_{SK}^2(\textbf{u})$
respectively. Hence, the conditional cumulative probability function (ccdf) is
$$F_{Y|data}(y)=G\Bigg(\frac{y-y^*_{SK}}{\sigma_{SK}}\Bigg),$$
where $F_{Y|data}(y)=P(Y\leq y|data)=G(y|data)$, and where ``data'' represents the conditioning data $Y(\textbf{u}_{\alpha})=y(\textbf{u}_{\alpha}),\alpha \in \{1,\dots,n(\textbf{k})\}$. The conditional cdf of the original variable is then retrieved as:
\begin{align*}
		F_{Z|data}(z)&=F_{Y|data}(y)\\
		&=\displaystyle G\Bigg(\frac{y-y^*_{SK}}{\sigma_{SK}}\Bigg)=G_{y^*_{SK}}^{\sigma_{SK}^2}(y),	
\end{align*}
where $F_{Z|data}(z)=P(Z\leq z|data)$. This can be rewritten in terms of the probability densities:
\begin{center}
	\begin{tabular}{c c l}
		$\displaystyle \int_{-\infty}^{\phi(y)} f_{Z|data}(\textbf{z})d\textbf{z}$&=&$\displaystyle \frac{1}{\sigma_{SK}}\cdot \int_{-\infty}^{y} g\Bigg(\frac{\textbf{y}-y^*_{SK}}{\sigma_{SK}}\Bigg)d\textbf{y}$
		\\
		&=&$\displaystyle \int_{-\infty}^{y} g_{y^*_{SK}}^{\sigma_{SK}^2}(\textbf{y})d\textbf{y},$\\
	\end{tabular}
\end{center}
from where follows:
\begin{equation}
f_{Z|data}(z)\cdot\phi^{'}(y) =g_{y^*_{SK}}^{\sigma_{SK}^2}(y).\label{Distr}
\end{equation}
From here we can obtain the conditional density function without incurring in any quantile sampling of the posterior Gaussian distribution to construct the local distribution in the raw values:
\begin{equation}\label{back}
f_{Z|data}(z)=\frac{1}{\phi^{'}(y)}\cdot g_{y^*_{SK}}^{\sigma_{SK}^2}(y)=\frac{1}{\phi^{'}(\phi^{-1}(z))}\cdot g_{y^*_{SK}}^{\sigma_{SK}^2}(\phi^{-1}(z)).
\end{equation}
We will say that $Z|data \sim \boldsymbol{\phi}(y^*_{SK},\sigma_{SK}^2)$. Let us define $Z^*:= Z|data$ hereafter.

Multiplying both sides by $\phi(y)$ in (\ref{Distr}) and integrating with respect to $y$ lead us to:
\begin{align}
\displaystyle\int_{-\infty}^{\phi(y)} \textbf{z}\cdot f_{Z|data}(\textbf{z})d\textbf{z}
&=\displaystyle \int_{-\infty}^{y} \phi(\textbf{y}) \cdot g_{y^*_{SK}}^{\sigma_{SK}^2} (\textbf{y})d\textbf{y}\nonumber\\	
&=\displaystyle\frac{1}{\sigma_{SK}}\cdot\int_{-\infty}^{y} \phi(\textbf{y}) \cdot g\Bigg(\frac{\textbf{y}-y^*_{SK}}{\sigma_{SK}}\Bigg)d\textbf{y}\nonumber\\
&=\displaystyle\int_{-\infty}^{y}\phi(\sigma_{SK}\cdot\textbf{y}+y^*_{SK}) \cdot g(\textbf{y})d\textbf{y}. \label{exp}
\end{align}

By taking $y\to\infty$, we obtain $[Z^*(\textbf{u})]:=\mathbb{E}[Z(\textbf{u})|data]$:
\begin{equation}
[Z^*(\textbf{u})]=\int_{-\infty}^{\infty} \phi(\textbf{y}) \cdot g_{y^*_{SK}}^{\sigma_{SK}^2} (\textbf{y})d\textbf{y}.\label{imp}
\end{equation}
Let us insist that $[Y^*(\textbf{u})]=y^*_{SK}$, however, from (\ref{imp}), it is clear that $[Z^*(\textbf{u})]$ is not just equal to $\phi([Y^*(\textbf{u})])$, which usually can lead to confusion and wrong results.

In the same manner, it is easy to obtain any moment (if they exist) under the multi-Gaussian model:
$$\mathbb{E}[Z^{n}(\textbf{u})|data]=\int_{-\infty}^{\infty} \phi^{n}(\textbf{y}) \cdot g_{y^*_{SK}}^{\sigma_{SK}^2} (\textbf{y})d\textbf{y},$$
leading to an expression to compute the variance in a general way for the raw variable:
\begin{align}\label{variance}
\mathbb{V}ar[Z(\textbf{u})|data]&=\mathbb{E}[Z^{2}(\textbf{u})|data]-\mathbb{E}[Z(\textbf{u})|data]^{2}\\
&=\int_{-\infty}^{\infty} \phi^{2}(\textbf{y}) \cdot g_{y^*_{SK}}^{\sigma_{SK}^2} (\textbf{y})d\textbf{y}\nonumber-[Z^*(\textbf{u})]^2.
\end{align}

We can extend the model into a $n$-variate model in the following way: the joint cumulative function of $n$ random variables $Y(\textbf{u}_1),\dots,Y(\textbf{u}_n)$ (related to $Z(\textbf{u}_1),\dots,Z(\textbf{u}_n)$) at locations $\textbf{u}_1,\dots,\textbf{u}_n$, respectively, conditioned by the sampled values, follows a multivariate Gaussian distribution $\mathcal{N}( \boldsymbol{\mu}_{SK},\boldsymbol{\Sigma}_{SK})$ with mean vector $\boldsymbol{\mu}= (y^*_{SK}(\textbf{u}_1),\dots,y^*_{SK}(\textbf{u}_n))' $ and covariance matrix equal to $\boldsymbol{\Sigma}=(\boldsymbol{\Sigma})_{ij}$, $i,j \in \{1,\dots,n\}$, such that $ (\boldsymbol{\Sigma})_{ij} = (\boldsymbol{\Sigma})_{ji}= \sigma_{SK}(\textbf{u}_i,\textbf{u}_j)= \sigma_{SK}(\textbf{u}_j,\textbf{u}_i)$ and $ (\boldsymbol{\Sigma})_{ii} = \sigma_{SK}^2(\textbf{u}_i)$, $i \in \{1,\dots,n\}$.

In the bi-variate case the model looks as follows:

\begin{center}
	\begin{tabular}{c c l}
		$F_{Z(\textbf{u}_0),Z(\textbf{u}_1)|data}(z_0,z_1)$
		&=&$P(Z(\textbf{u}_0)\leq z_0,Z(\textbf{u}_1)\leq z_1|data)$\\	
		&&\\
		&=&$P(Y(\textbf{u}_0)\leq y_0,Y(\textbf{u}_1)\leq y_1|data)$\\
		&&\\
		&=&$F_{Y(\textbf{u}_0),Y(\textbf{u}_1)|data}(y_0,y_1)$\\
		&&\\
		&=&$G_{\boldsymbol{\mu}_{SK}}^{\mathbf{\Sigma}_{SK}}(y_0,y_1),$\\		
	\end{tabular}
\end{center}
with:
\begin{center}
	\begin{tabular}{c c l}
		$\boldsymbol{\mu}_{SK}$&=&$
		\begin{pmatrix}
		y^*_{SK}(\textbf{u}_0) \\
		y^*_{SK}(\textbf{u}_1) \\
		\end{pmatrix}
		$\\
		&&\\
		$\mathbf{\Sigma}_{SK}$&=&$
		\begin{pmatrix}
		\mathbb{E}\{(Y^*(\textbf{u}_{0})-Y(\textbf{u}_{0}))^2\}& \mathbb{E}\{(Y^*(\textbf{u}_{0})-Y(\textbf{u}_{0}))\cdot(Y^*(\textbf{u}_{1})-Y(\textbf{u}_{1}))\}\\
		\mathbb{E}\{(Y^*(\textbf{u}_{0})-Y(\textbf{u}_{0}))\cdot(Y^*(\textbf{u}_{1})-Y(\textbf{u}_{1}))\}&\mathbb{E}\{(Y^*(\textbf{u}_{1})-Y(\textbf{u}_{1}))^2\} \\
		\end{pmatrix}
		$ \\
		&&\\
		&=&$
		\begin{pmatrix}
		\sigma^2-\textbf{k}_{0}^T \cdot\textbf{C}^{-1} \cdot \textbf{k}_{0} & \sigma_{\textbf{u}_{0},\textbf{u}_{1}}-\textbf{k}_{0}^T \cdot\textbf{C}^{-1} \cdot \textbf{k}_{1}\\
		\sigma_{\textbf{u}_{1},\textbf{u}_{0}}-\textbf{k}_{1}^T \cdot\textbf{C}^{-1} \cdot \textbf{k}_{0}&\sigma^2-\textbf{k}_{1}^T \cdot\textbf{C}^{-1} \cdot \textbf{k}_{1} \\
		\end{pmatrix}
		$\\
		&&\\
		&=&$
		\begin{pmatrix}
		\sigma_{SK}^2(\textbf{u}_0) & \sigma_{SK}(\textbf{u}_0,\textbf{u}_1)\\
		\sigma_{SK}(\textbf{u}_1,\textbf{u}_0)&\sigma_{SK}^2(\textbf{u}_1) \\
		\end{pmatrix}.
		$ \\
	\end{tabular}
\end{center}

This can be rewritten in terms of probability densities:
\begin{center}
	\begin{tabular}{c c l}
		$\displaystyle\int_{-\infty}^{z_0} \int_{-\infty}^{z_1} f_{Z_0,Z_1|data} (\mathbf{z_0},\mathbf{z_1}) d\mathbf{z_0} d\mathbf{z_1}$
		&=&$\displaystyle\int_{-\infty}^{\phi(y_0)} \int_{-\infty}^{\phi(y_1)} f_{Z_0,Z_1|data} (\mathbf{z_0},\mathbf{z_1}) d\mathbf{z_0} d\mathbf{z_1}$\\	
		&&\\
		&=&$\displaystyle\int_{-\infty}^{y_0} \int_{-\infty}^{y_1} g_{\boldsymbol{\mu}_{SK}}^{\mathbf{\Sigma}_{SK}}(\mathbf{y_0},\mathbf{y_1}) d\mathbf{y_0} d\mathbf{y_1},$\\
	\end{tabular}
\end{center}
which leads to the following equality:
\begin{equation}
f_{Z_0,Z_1|data} (\phi(y_0),\phi(y_1))\cdot \phi'(y_0) \cdot \phi'(y_1) = g_{\boldsymbol{\mu}_{SK}}^{\mathbf{\Sigma}_{SK}}(y_0,y_1). 
\label{e1}
\end{equation}
Hence, we can obtain the probability distribution by:
\begin{equation}
f_{Z_0,Z_1|data} ({z_0},{z_1})= \frac{1}{\phi'(y_0)} \cdot \frac{1}{\phi'(y_1)} \cdot g_{\boldsymbol{\mu}_{SK}}^{\mathbf{\Sigma}_{SK}}(y_0,y_1).\label{e2}
\end{equation}

Multiplying both sides by $\phi(y_0)\cdot\phi(y_1)$ in \ref{e1} and integrating with respect to $y_0$ and $y_1$ we obtain:
$$\int_{-\infty}^{\phi(y_0)} \int_{-\infty}^{\phi(y_1)} \mathbf{z_0}\mathbf{z_1} f_{Z_0,Z_1|data} (\mathbf{z_0},\mathbf{z_1}) d\mathbf{z_0} d\mathbf{z_1}= \int_{-\infty}^{y_0} \int_{-\infty}^{y_1} \phi(\mathbf{y_0})\phi(\mathbf{y_1}) g_{\boldsymbol{\mu}_{SK}}^{\mathbf{\Sigma}_{SK}}(\mathbf{y_0},\mathbf{y_1}) d\mathbf{y_0} d\mathbf{y_1}.$$

By taking $y_0$, $y_1\to\infty$, we obtain $\mathbb{E}[Z(\textbf{u}_0)^*Z(\textbf{u}_1)^*]:=\mathbb{E}[Z(\textbf{u}_0)Z(\textbf{u}_1)|data]$:
\begin{align*} 
\mathbb{E}[Z(\textbf{u}_0)^*Z(\textbf{u}_1)^*] =\int_{-\infty}^{\infty} \int_{-\infty}^{\infty} \phi(\mathbf{y_0})\phi(\mathbf{y_1}) g_{\boldsymbol{\mu}_{SK}}^{\mathbf{\Sigma}_{SK}}(\mathbf{y_0},\mathbf{y_1}) d\mathbf{y_0} d\mathbf{y_1}.
\end{align*}
Combining the previous results, we obtain a general expression for the covariance under the multi-Gaussian model. We can see explicitly the covariance as a measure of stochastic dependence between the locations  $\textbf{u}_0$ and $\textbf{u}_1$:
\begin{align} \label{cov}
	\mathbb{C}ov(Z^*(\textbf{u}_0),Z^*(\textbf{u}_1))&=\mathbb{E}[Z(\textbf{u}_0)^*Z(\textbf{u}_1)^*]-[Z^*(\textbf{u}_0)][Z^*(\textbf{u}_1)]\\
	&=\displaystyle \int_{-\infty}^{\infty} \int_{-\infty}^{\infty} \phi(\mathbf{y_0})\phi(\mathbf{y_1}) g_{\boldsymbol{\mu}_{SK}}^{\mathbf{\Sigma}_{SK}}(\mathbf{y_0},\mathbf{y_1}) d\mathbf{y_0} d\mathbf{y_1}\nonumber\\
	&\displaystyle -\int_{-\infty}^{\infty} \phi(\textbf{y}) \cdot g_{y^*_{SK}(\textbf{u}_0)}^{\sigma_{SK}^2(\textbf{u}_0)} (\textbf{y})d\textbf{y}\cdot \int_{-\infty}^{\infty} \phi(\textbf{y}) \cdot g_{y^*_{SK}(\textbf{u}_1)}^{\sigma_{SK}^2(\textbf{u}_1)} (\textbf{y})d\textbf{y}. \nonumber	
\end{align}
Then, by plugging in this result in Equation \ref{covav}, we obtain a general answer for the assessment of the variance in the volume $\boldsymbol{V}$ under the multi-Gaussian framework.

In the n-variate case the mean vector and covariance matrix of the model looks as follows:
\begin{center}
	\begin{tabular}{c c l}
		$\boldsymbol{\mu}_{SK}$&=&$
		\begin{pmatrix}
		y^*_{SK}(\textbf{u}_1) \\
		\vdots \\
		y^*_{SK}(\textbf{u}_n) \\
		\end{pmatrix}
		$\\
		&&\\
		$\mathbf{\Sigma}_{SK}$&=&$
		\begin{pmatrix}
		\sigma_{SK}^2(\textbf{u}_1) &\cdots & \sigma_{SK}(\textbf{u}_1,\textbf{u}_n)\\
		\vdots&\ddots & \vdots\\
		\sigma_{SK}(\textbf{u}_n,\textbf{u}_1)& \cdots&\sigma_{SK}^2(\textbf{u}_n) \\
		\end{pmatrix},
		$ \\
		
	\end{tabular}
\end{center}
and the ccdf is extended by the link with a $n$-variate Gaussian cdf:
\begin{center}
	\begin{tabular}{c c l}
		$F_{Z(\textbf{u}_1),\dots,Z(\textbf{u}_n)|data}(z_1,\dots,z_n)$
		&=&$P(Z(\textbf{u}_1)\leq z_1,\dots,Z(\textbf{u}_n)\leq z_n|data)$\\	
		&&\\
		&=&$P(Y(\textbf{u}_1)\leq y_1,\dots,Y(\textbf{u}_n)\leq y_n|data)$\\
		&&\\
		&=&$F_{Y(\textbf{u}_1),\dots,Y(\textbf{u}_n)|data}(y_1,\dots,y_n)$\\
		&&\\
		&=&$G_{\boldsymbol{\mu}_{SK}}^{\mathbf{\Sigma}_{SK}}(y_1,\dots,y_n).$\\		
	\end{tabular}
\end{center}

\subsection{Examples}

\subsubsection{The Log-Normal Case}

If $ X $ is a normal random variable with mean $ \mu $ and variance $\sigma^2$, then the random variable $Z=e^X $ is said to be \textit{log-normal} with parameters $\mu$ and $\sigma^2$, or just $Z\sim Logn(\mu,\sigma^2)$. Thus, a random variable $Z$ is log-normal if $ln(Z)$ is normal.

The cumulative distribution function is given by:
$$F_{Z}(z)=G\Bigg(\frac{ln(z)-\mu}{\sigma}\Bigg).$$

It follows that the anamorphosis function is obtained by using \ref{qt}:
$$G\Bigg(\frac{ln(z)-\mu}{\sigma}\Bigg)=G(y),$$
Hence:
\begin{equation}
z=\phi(y)=e^\mu\cdot e^{\sigma y},\quad\quad y=\phi^{-1}(z)=\frac{ln(z)-\mu}{\sigma}.
\end{equation}
From here we obtain:
$$\phi^{'}(y)=\sigma\cdot e^\mu\cdot e^{\sigma y}=\sigma\phi(y),$$
(notice that $\phi^{'}(\phi^{-1}(z))=\sigma\phi(\phi^{-1}(z))=\sigma z$) and that the conditioned local distribution is given by:
\begin{center}
	\begin{tabular}{c c l}
		$f_{Z|data}(z)$&=&$\displaystyle\frac{1}{\phi^{'}(y)}\cdot g_{y^*_{SK}}^{\sigma_{SK}^2}(y)$\\
		&&\\
		&=&$\displaystyle\frac{1}{\sigma\cdot e^{\mu +\sigma y}}\cdot \frac{1}{\smash{\sigma_{SK} \sqrt{2\pi}}}\cdot e^{-\frac{(y-y^*_{SK})^2}{2\sigma_{SK}^2}}$\\ 
		&&\\
		&=&$\displaystyle\frac{1}{z\sigma\sigma_{SK}\sqrt{2\pi}} \cdot e^{-\frac{(ln(z)-\mu-\sigma y^*_{SK})^2}{2\sigma^2\sigma_{SK}^2}}.$\\ 
	\end{tabular}
\end{center}
This is a log-normal distribution with parameters $ \mu+\sigma y^*_{SK} $ and $\sigma^2\sigma_{SK}^2$ for the mean and variance, respectively. Therefore, the local distribution at a certain point $\textbf{u}$ conditioned by the data, which presents a log-normal prior distribution, preserves the log-normality.

Following the same procedure and extending (\ref{e2}) to the $n$-dimesional case, lead us to compute that the local conditional distribution of the vector of random variables $ (Z_1,\dots,Z_n)'$, each one with $Logn(\mu,\sigma^2)$ prior distribution, has a conditional distribution such that $ (X_1,\dots,X_n)' = (ln (Z_1),\dots,ln (Z_n))' $ has an $n$-dimensional
normal distribution with mean vector $\boldsymbol{\mu}= (\mu+\sigma y^*_{SK}(\textbf{u}_1),\dots,\mu+\sigma y^*_{SK}(\textbf{u}_n))' $ and covariance matrix  $\boldsymbol{\Sigma}=(\boldsymbol{\Sigma})_{ij}$, $i,j \in \{1,\dots,n\}$, such that $ (\boldsymbol{\Sigma})_{ij} = (\boldsymbol{\Sigma})_{ji}= \sigma^2\sigma_{SK}(\textbf{u}_i,\textbf{u}_j)$ and $ (\boldsymbol{\Sigma})_{ii} = \sigma^2\sigma_{SK}^2(\textbf{u}_i) $, $i \in \{1,\dots,n\}$, given the data.

In Figure \ref{fig:lognormals}, some possible posterior distributions and their bi-variate behavior are presented.

\begin{figure}[htbp]
	\centering
	\includegraphics[width=0.49\textwidth]{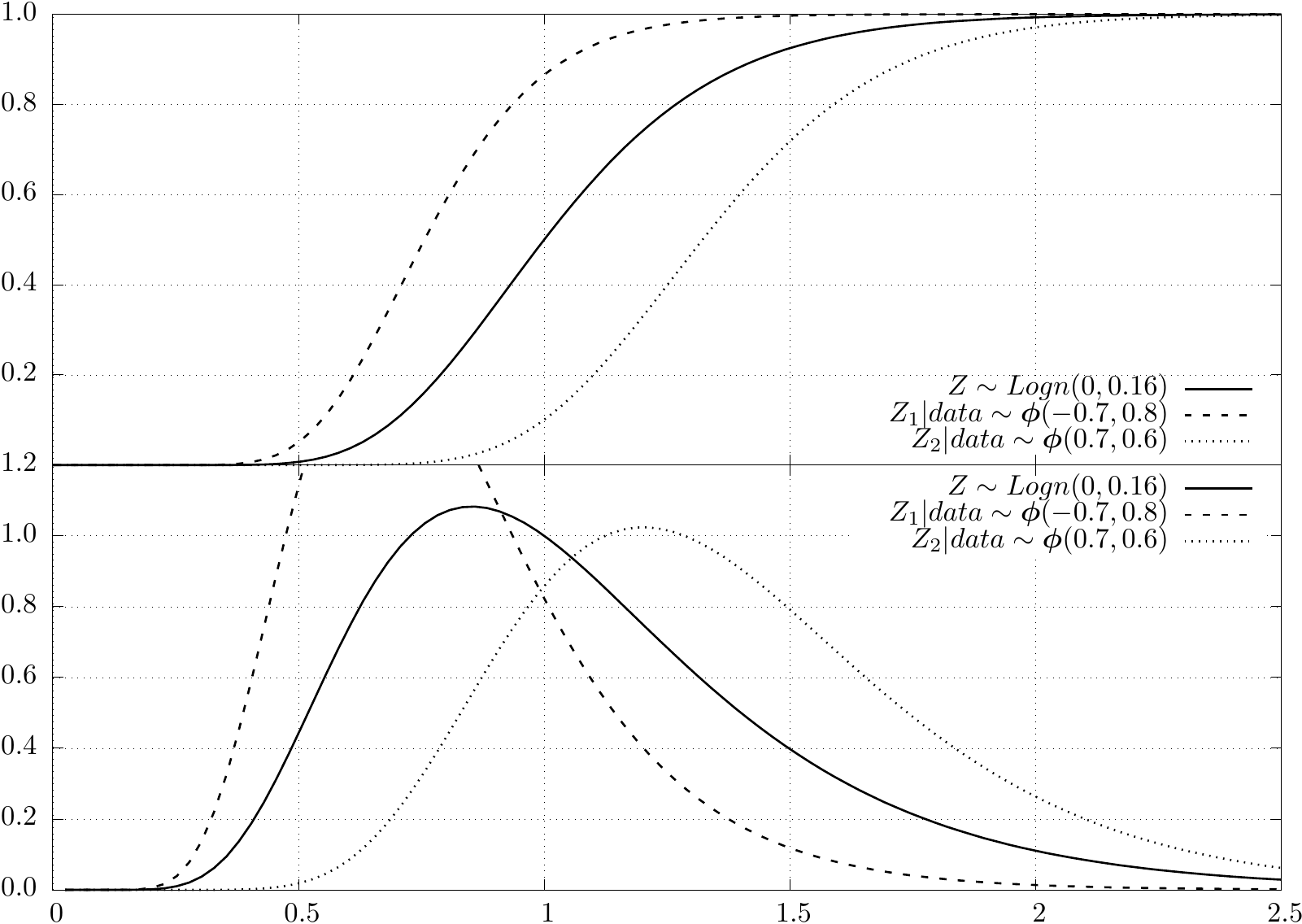}
	\includegraphics[width=0.49\textwidth]{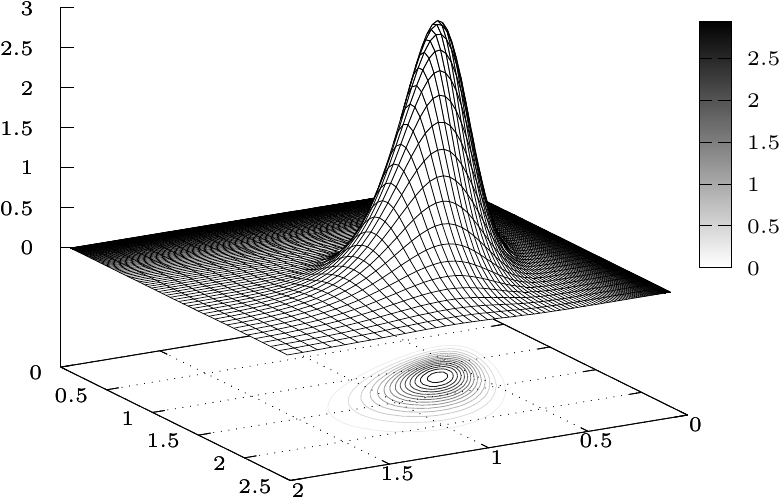}
	
	\caption{Example of probability distributions at a certain pair of points $Z_1$ and $Z_2$ given a log-normal prior distribution, and their bi-variate behavior when the correlation of their Gaussian transformations is $\rho=0.6$.}\label{fig:lognormals}
\end{figure}

\subsubsection{The Exponential Case}

A continuous random variable $ Z $ is said to have an \textit{exponential} distribution with
parameter $ \lambda $, $ \lambda > 0 $, or just $Z\sim Exp(\lambda)$, if its probability density function is given by $f_{Z}(z)=\lambda e^{-\lambda z}$, $z \geq 0$ or, equivalently, if its cdf is given by
$$F_{Z}(z)=1-e^{-\lambda z}, \qquad\qquad  z \geq 0.$$
It follows that the anamorphosis function is:
\begin{equation}
z=\phi(y)=-\frac{1}{\lambda}\cdot ln(1-G(y)).
\end{equation}
From here we obtain:
$$\phi^{'}(y)=\frac{1}{\lambda}\cdot \frac{1}{1-G(y)}\cdot g(y),$$

We will not attempt to find out what kind of probability distribution results of the backtransformation, as we did with the log-normal case. However, numerical results of how some posterior distributions looks like are presented in Figure \ref{fig:exp}. It is very interesting how the exponential distribution is not preserved any more in the posterior distributions (only when the posterior distribution follows a $\mathcal{N}(0,1)$).

\begin{figure}[htbp]
	\centering
	\includegraphics[width=0.49\textwidth]{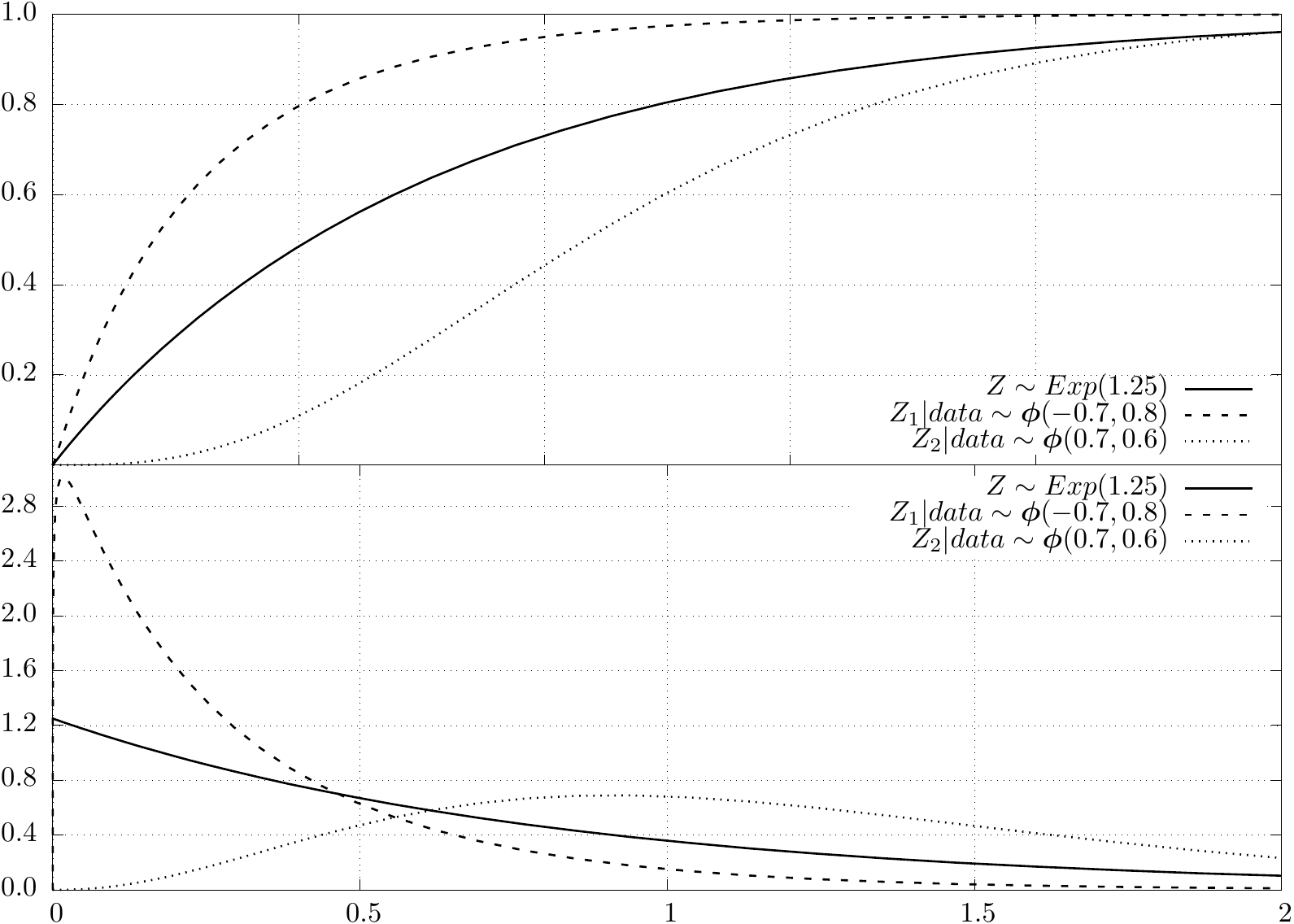}
~ 
	\includegraphics[width=0.49\textwidth]{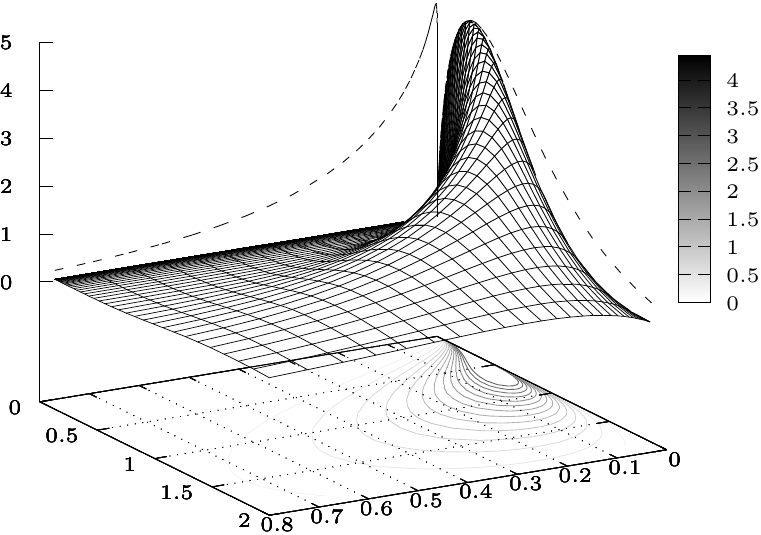}
	
	\caption{Example of probability distributions at a certain pair of points $Z_1$ and $Z_2$ given an exponential prior distribution, and their bi-variate behavior when the correlation of their Gaussian transformations is $\rho=0.6$.}\label{fig:exp}
\end{figure}

We present a synthetic case where the simulation procedure is compared with the results of the direct application of the formulations presented. The synthetic reality for the analysis is generated by LU non-conditional simulation with a known omni directional variogram $\gamma(\mathbf{h})=0.1\cdot Nugg+0.9 \cdot Sph(100)$ in a 100x100 grid of 5x5 units spacing between nodes, for the Gaussian values. Then, the synthetic scenario is randomly sampled, obtaining 100 samples. An exponential distribution with parameter $\lambda=1/0.8$ is used.

\begin{figure}[htbp]
	\centering
	\includegraphics[height=0.32\linewidth]{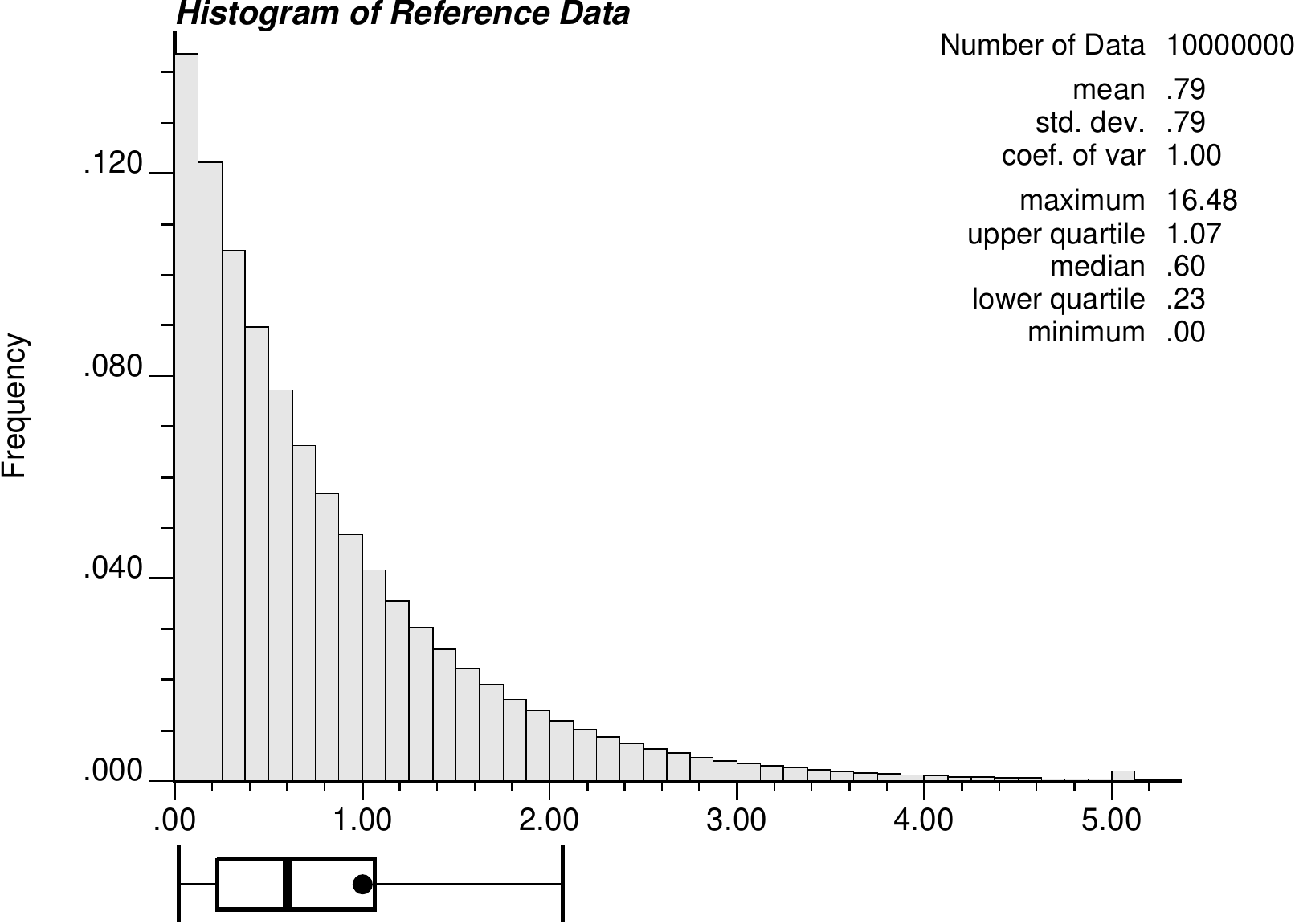}
	\includegraphics[height=0.32\linewidth]{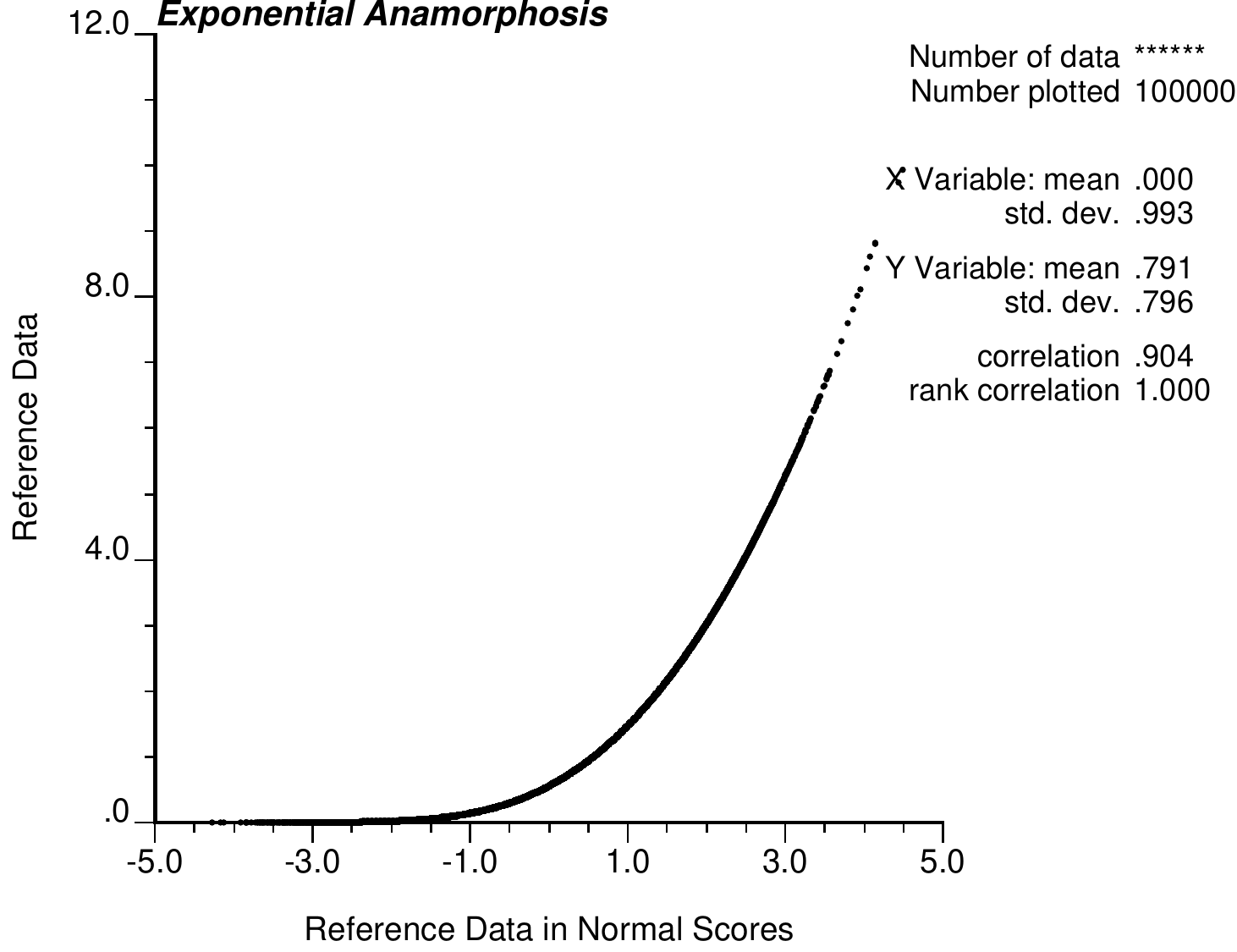}\\
	\includegraphics[width=0.32\linewidth]{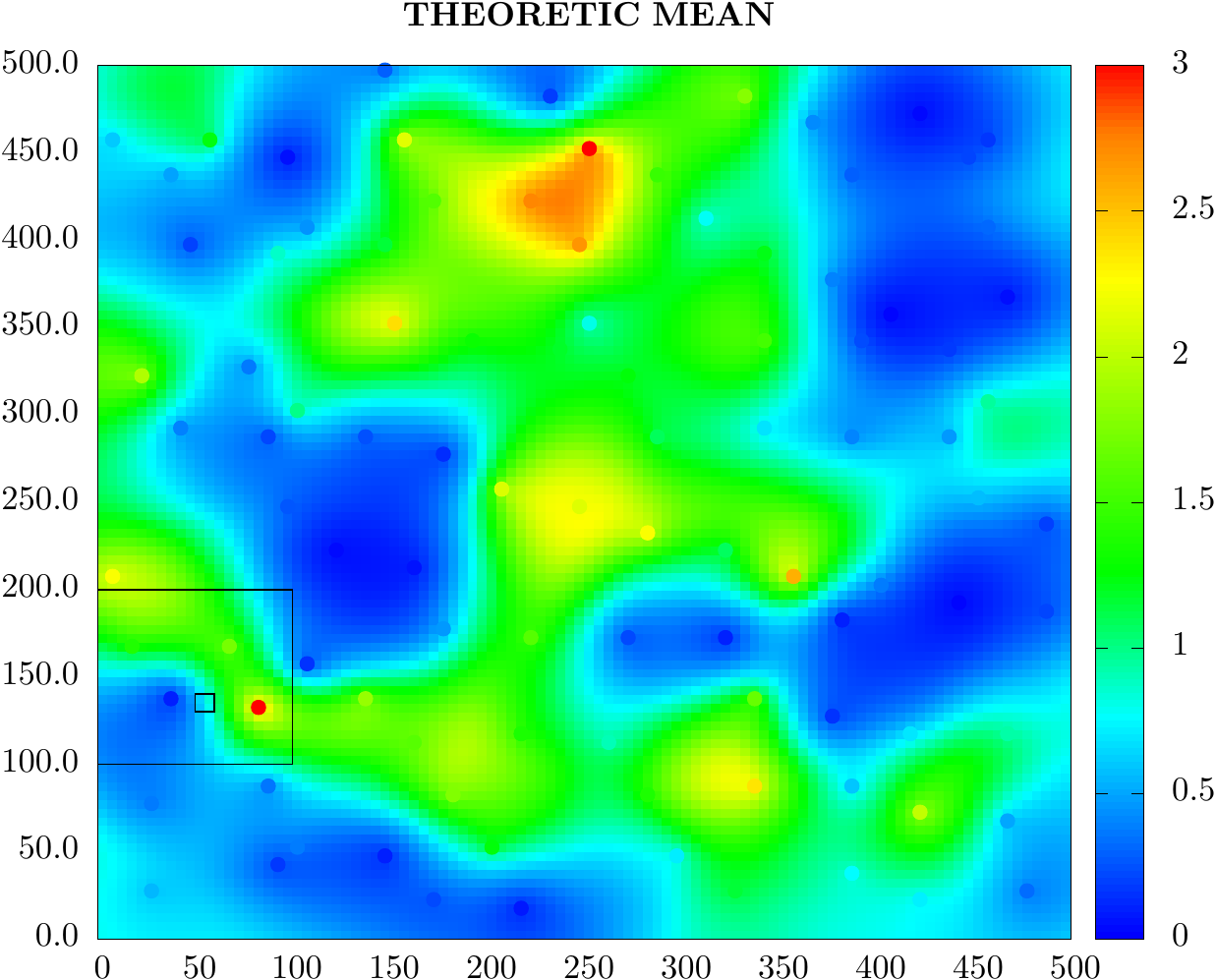}
	\includegraphics[width=0.32\linewidth]{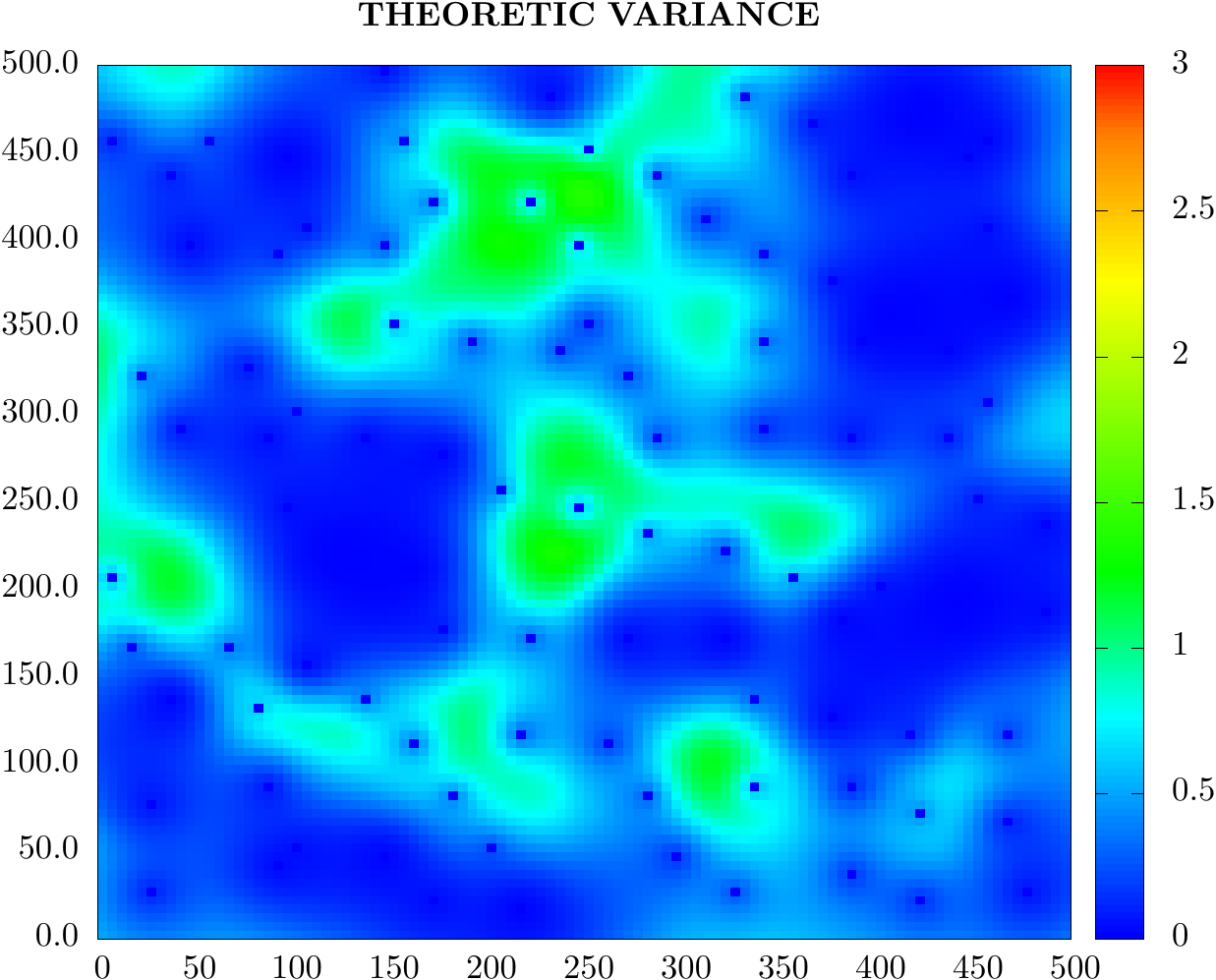} 
	\includegraphics[width=0.32\linewidth]{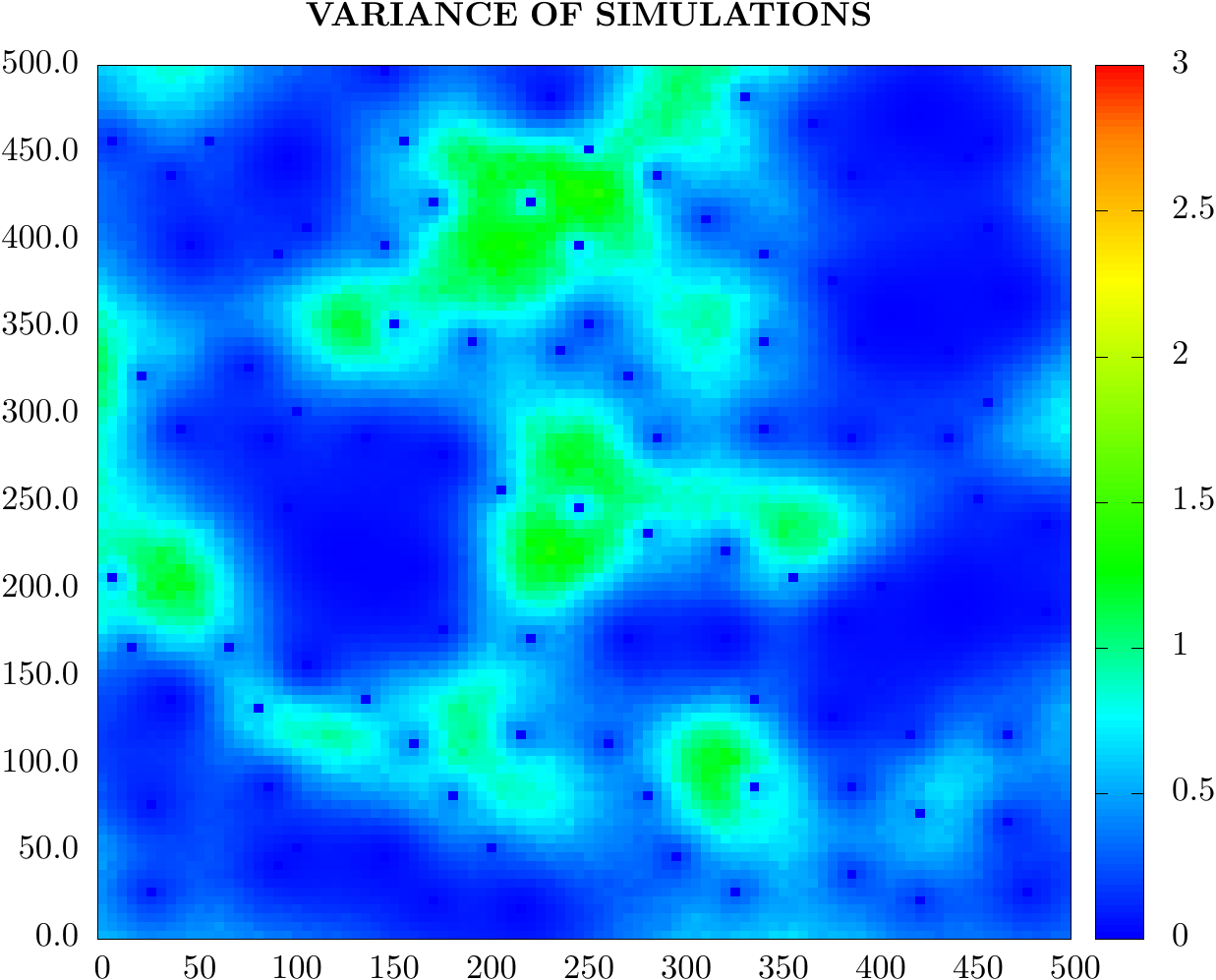}\\
	\includegraphics[width=0.3\linewidth]{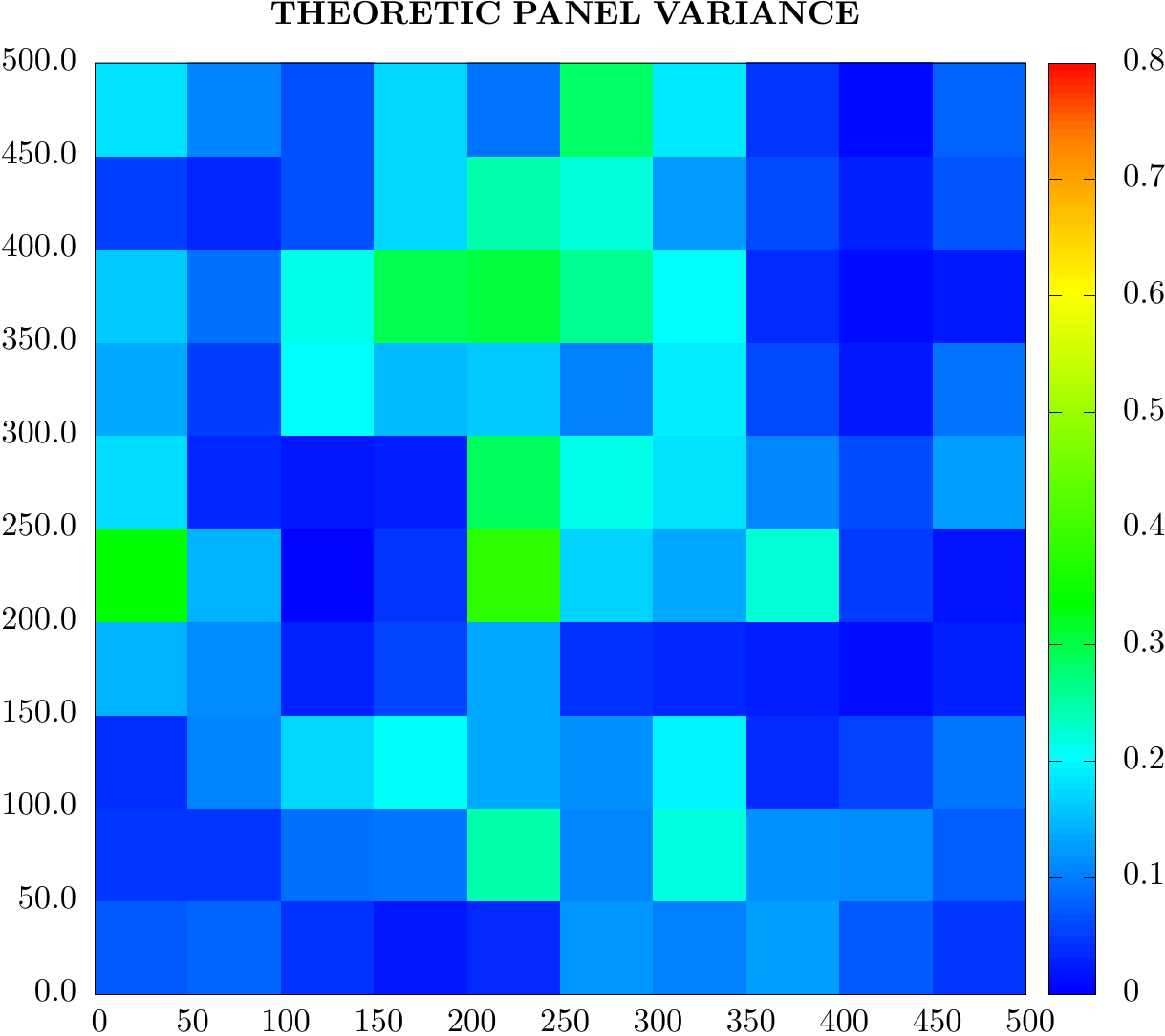} 
	\includegraphics[width=0.3\linewidth]{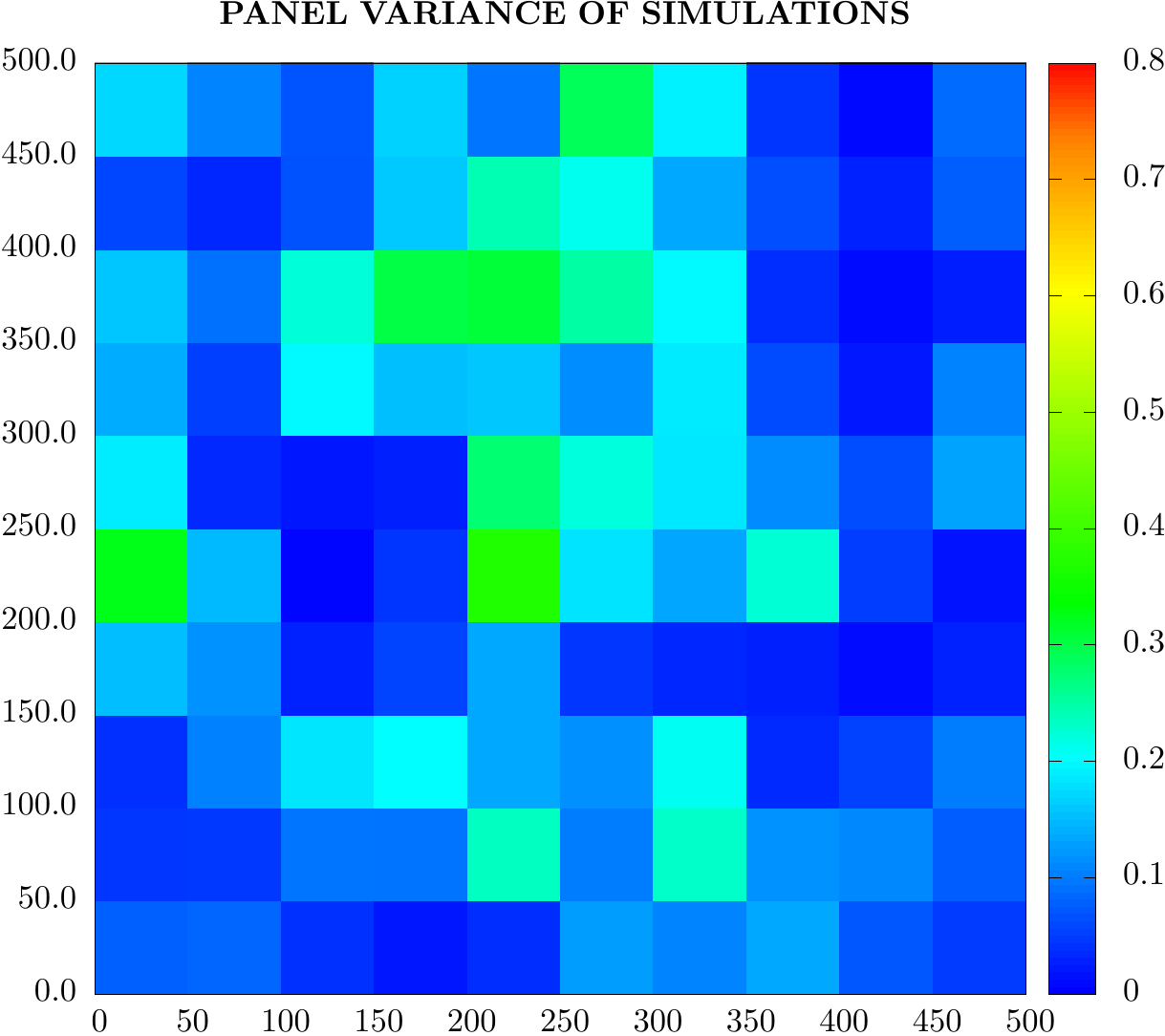}\\
	\includegraphics[width=0.32\linewidth]{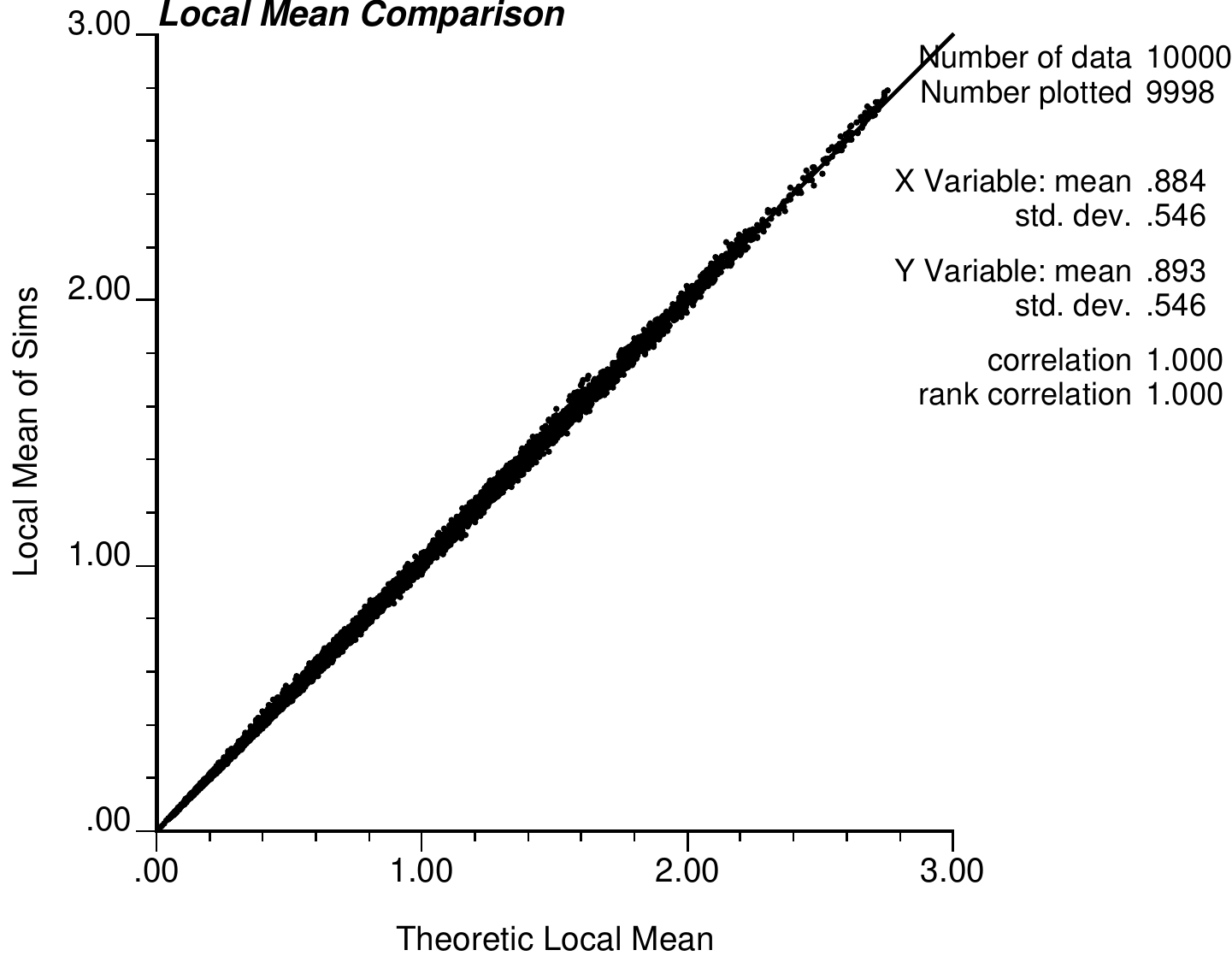}
	\includegraphics[width=0.32\linewidth]{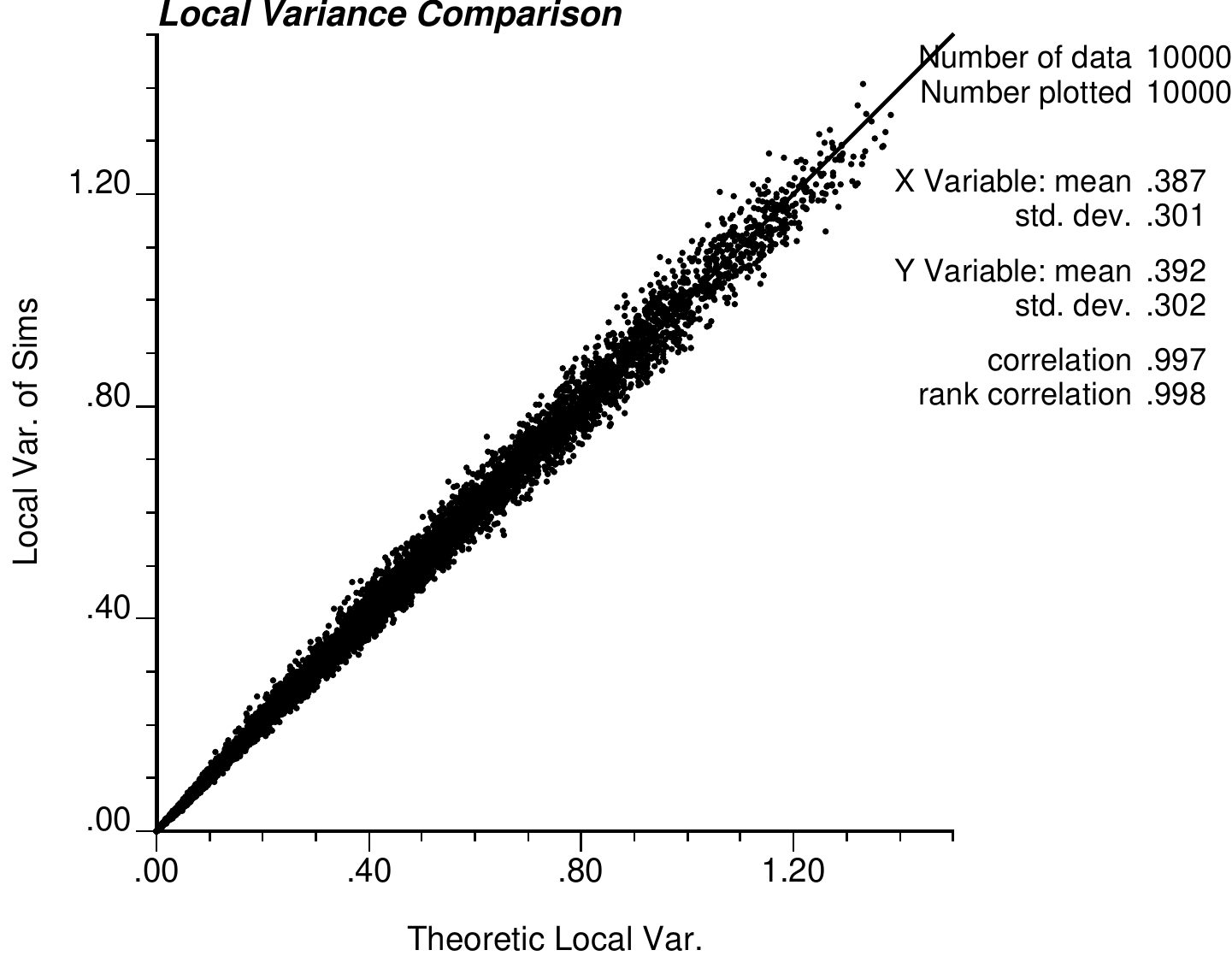}
	\includegraphics[width=0.32\linewidth]{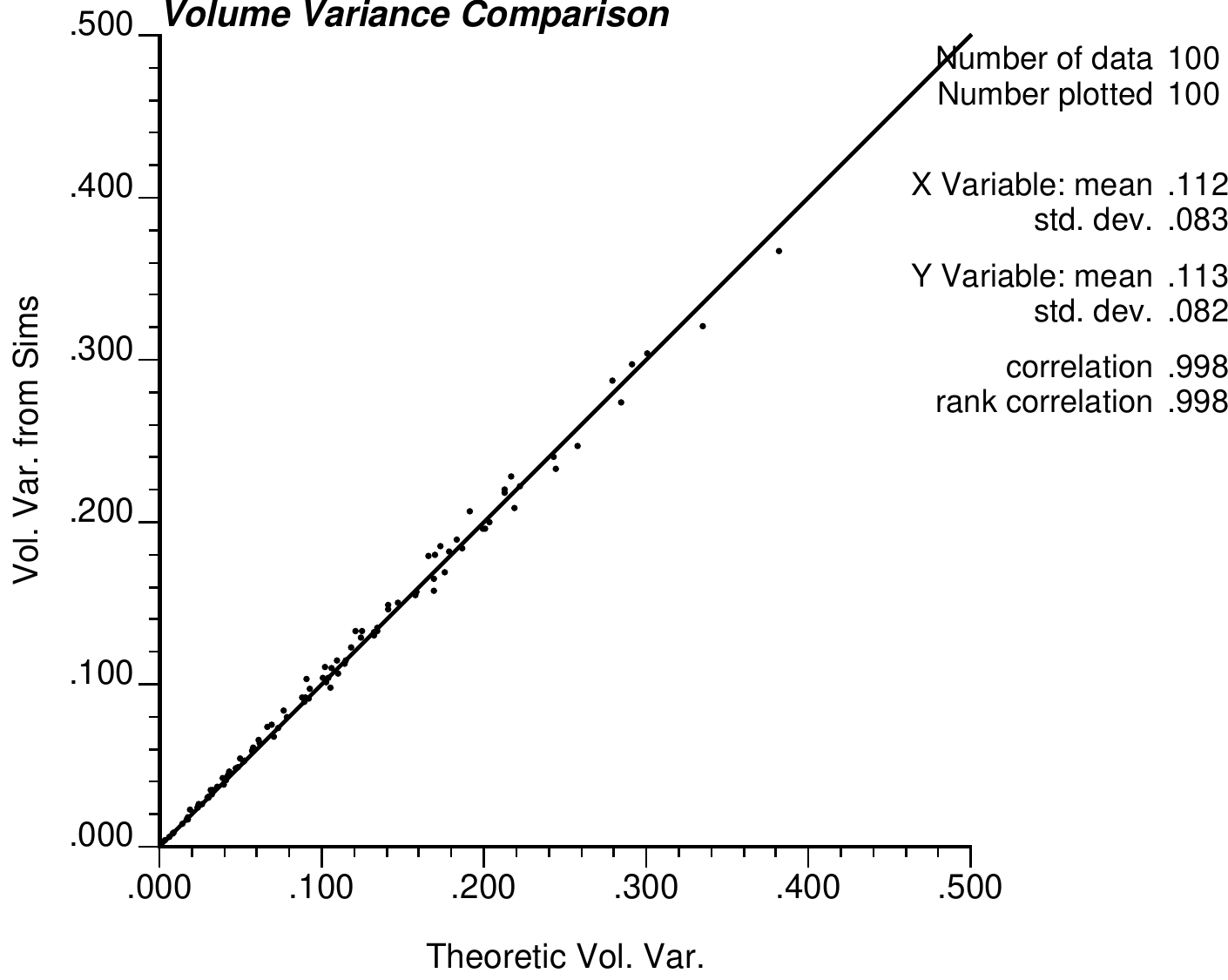}			
	\caption{First row: Reference exponential histogram for the generation of raw values and their anamorphosis. Second row: maps with results of theoretic local mean and variance, and the local variance taking into account 2,000 realizations. The box zone will be zoomed for the example shown in Figure \ref{fig:Res3}. Third row: maps with results of theoretic variance in the volume support (averaging 10x10 points), and the local variance taking into account 2,000 re-blocked realizations. Fourth row: Scatter plots comparing both results theoretic and simulated.}
	\label{img:DataResults}
\end{figure}


\section{A Note Towards the Change of Support}
\label{sec:ChangeSup}

In mining, the actual selection is made on panels, and not on core samples with size $\sim \textbf{u}$. Different panels may be blended into a larger volume $\boldsymbol{V}$ (in downstream processes). Therefore, for the estimation of reserves, it is essential to take the support (size and geometry) of the selection unit into account (resulting in several hundred or thousand tones). Otherwise there is a risk of bias with serious economic consequences.

In general, this support will be considerably different from the support of the exploration unit (core samples for example). This is one of the most important aspects on the economic and technical context of a mining project, with consequences in the amount of recoverable reserves. The entire resources of a deposit are rarely mineable because of the variability of mining and treatment costs added to the spatial variability of ore quality. For a detailed explanation of the issue, the reader can consult the survey by Huijbregts \cite{Huijbregts}. In a nutshell, consider the histogram of dispersion of the grades $Z(\textbf{u})$ as on Figure \ref{img:CS}.
The histogram of grades of all the mining units of larger support will have: (i) a mean equal to the mean of the core sample grades; (ii) an experimental dispersion variance less than the core support; and (iii) for high cutoff grades above the mean, $z>m$, the hatched area of the histogram of core grades may seriously overestimate the possible mined recovery in both tonnage and mean grade and, correspondingly, underestimate the tonnage of metal left in the waste, i.e., underestimate the area corresponding to the panels with true grades $z<m$.

\begin{figure}[htbp]
	\centering
	\includegraphics[height=0.5\linewidth]{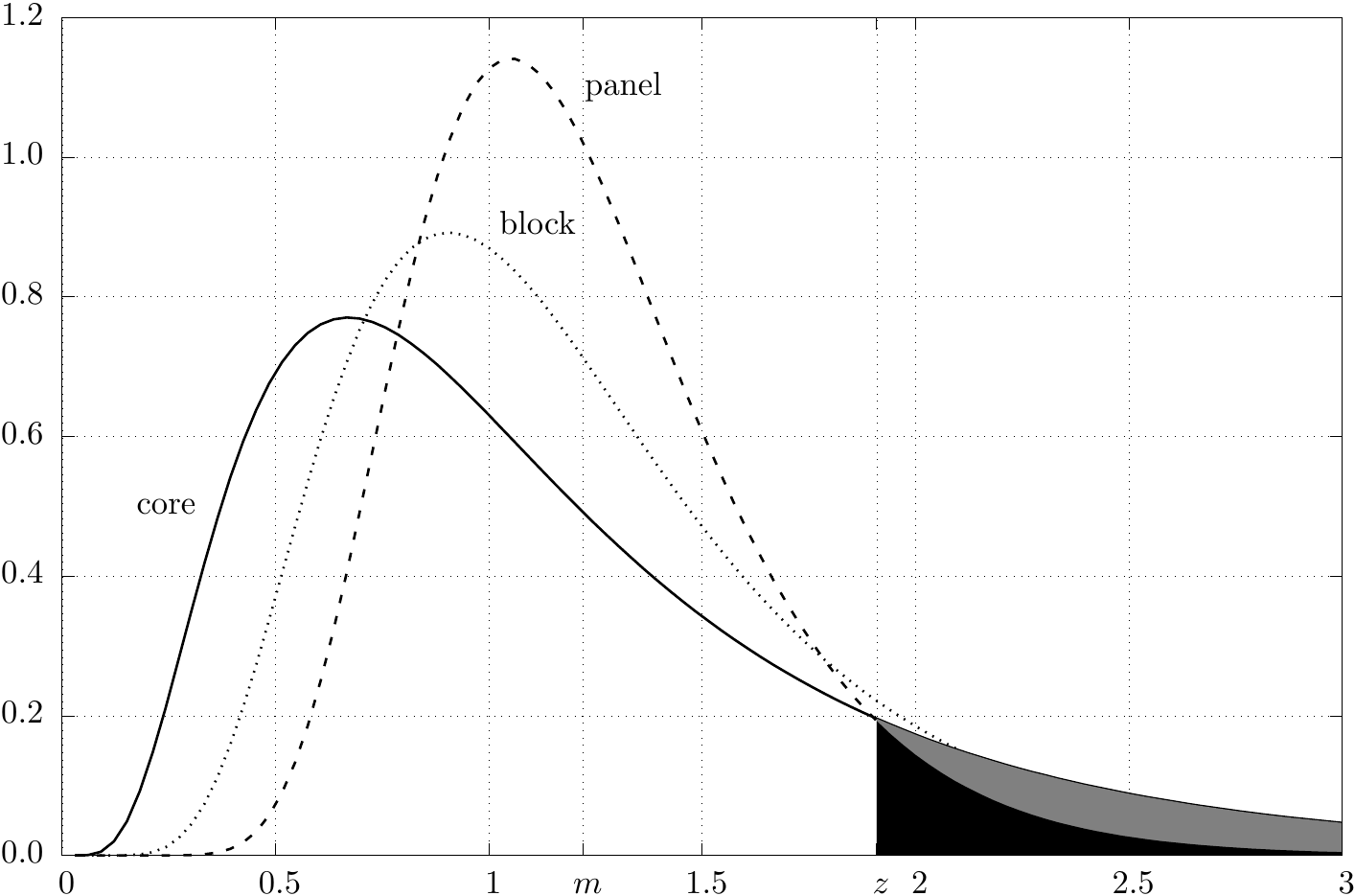}		
	\caption{Change in the variable probability distribution when different supports are considered.}
	\label{img:CS}
\end{figure}

In what follows, we present an extension of the multi-Gaussian model to address the distribution when a change of support is considered, locally. We are interested in finding the probability distribution of the main variable within a given volume of the spatial domain. For this, let's say the volume is approximated by a discretization of $n$  points, located at $\textbf{u}_i$ positions, $i=1,\dots,n$, each one with mass density 
$\rho(\textbf{u}_i)$, $\sum_{i=1}^{n}\rho(\textbf{u}_i)=1$, we are interested in finding the probability distribution of the random variable:
\begin{equation}\label{npoints}
Z^*_{\boldsymbol{V}}= \rho(\textbf{u}_1)Z^*(\textbf{u}_1) + \rho(\textbf{u}_1)Z^*(\textbf{u}_2)+ \dots + \rho(\textbf{u}_n)Z^*(\textbf{u}_n),
\end{equation}
recalling that the point-support random variables are correlated, and they have to be conditioned by the sampled data, which, present themselves in different configurations and with different conditioning values, leading to different point-support distributions.

In this discretized framework, we can consider $\rho(\textbf{u})$ constant within the geological volume, and we omit, without loss of generality, this constant for the next steps. The problem can be addressed as follows.

Let's begin by expressing the cumulative  distribution of the sum:
\begin{center}
	\begin{tabular}{c c l}
		$F_{Z^*_{\boldsymbol{V}}}(z)$&=&$P(Z^*_{\boldsymbol{V}}\leq z)$\\
		&&\\
		&=&$P(Z_{\boldsymbol{V}}\leq z|data)$\\ 
		&&\\
		&=&$P(Z(\textbf{u}_1)+ \dots + Z(\textbf{u}_n)\leq z|data)$\\ 
		&&\\
		&=&$\displaystyle\idotsint\displaylimits_{z_1+\dots+z_n\leq z} f_{Z_1,\dots,Z_n|data} (\mathbf{z_1},\dots,\mathbf{z_n}) d\mathbf{z_1}\dots d\mathbf{z_n}$
	\end{tabular}
\end{center}
This can be rewritten in term of the $Y$ variable and the anamorphosis function, for a given value $a$ such that $z=\phi(a)$:
\begin{center}
	\begin{tabular}{c c l l l}
		$F_{Z^*_{\boldsymbol{V}}}(z)$&=&$\displaystyle\idotsint\displaylimits_{z_1+\dots+z_n\leq z} f_{Z_1,\dots,Z_n|data} (\mathbf{z_1},\dots,\mathbf{z_n}) d\mathbf{z_1}\dots d\mathbf{z_n}$&&\\
		&&&&\\
		&=&$\displaystyle\idotsint\displaylimits_{\phi(y_1)+\dots+\phi(y_n)\leq \phi(a)} g_{\boldsymbol{\mu}_{SK}}^{\mathbf{\Sigma}_{SK}} (\mathbf{y_1},\dots,\mathbf{y_{n-1}},\mathbf{y_n}) d\mathbf{y_n}d\mathbf{y_{n-1}}\dots d\mathbf{y_1}$&=&$F_{Z^*_{\boldsymbol{V}}}(\phi(a)).$\\ 
	\end{tabular}
\end{center}
Then, the probability density can be determined joining the following two developments. On one hand:

\begin{align*}
	&\displaystyle\frac{d}{da}F_{Z^*_{\boldsymbol{V}}}(\phi(a))=f_{Z^*_{\boldsymbol{V}}}(\phi(a))\cdot\phi'(a),
\end{align*}
and on the other hand:

\begin{align*}
	&\displaystyle\frac{d}{da}F_{Z^*_{\boldsymbol{V}}}(\phi(a))\\
	&=\displaystyle\frac{d}{da}\int_{-\infty}^{\infty}\dots\int_{-\infty}^{\infty}\int_{-\infty}^{\phi^{-1}\{\phi(a)-\phi(y_1)+\dots+\phi(y_{n-1})\}} g_{\boldsymbol{\mu}_{SK}}^{\mathbf{\Sigma}_{SK}} (\mathbf{y_1},\dots,\mathbf{y_{n-1}},\mathbf{y_n})d\mathbf{y_n}d\mathbf{y_{n-1}}\dots d\mathbf{y_1}\\
	&=\displaystyle\int_{-\infty}^{\infty}\dots\int_{-\infty}^{\infty}\frac{d}{da}\int_{-\infty}^{\phi^{-1}\{\phi(a)-\phi(y_1)+\dots+\phi(y_{n-1})\}} g_{\boldsymbol{\mu}_{SK}}^{\mathbf{\Sigma}_{SK}} (\mathbf{y_1},\dots,\mathbf{y_{n-1}},\mathbf{y_n}) d\mathbf{y_n}d\mathbf{y_{n-1}}\dots d\mathbf{y_1}\\
	&=\displaystyle\int_{-\infty}^{\infty}\dots\int_{-\infty}^{\infty}\frac{g_{\boldsymbol{\mu}_{SK}}^{\mathbf{\Sigma}_{SK}}(\mathbf{y_1},\dots,\mathbf{y_{n-1}},\phi^{-1}\{\phi(a)-\phi(\mathbf{y_1})+\dots+\phi(\mathbf{y_{n-1})}\})}{\phi'[\phi^{-1}\{\phi(a)-\phi(\mathbf{y_1})+\dots+\phi(\mathbf{y_{n-1}})\}]}\cdot \phi'(a)d\mathbf{y_{n-1}}\dots d\mathbf{y_1}.\\
\end{align*}

This shows that we are moving from the volume of integration $\mathcal{V}(a)$ parameterized by $\mathcal{V}(a)=\{z_1,\dots,z_n|z_1+\dots+z_n\leq \phi(a)\}$, when we compute $F_{Z^*_{\boldsymbol{V}}}$, to the surface of this volume $\mathcal{S}(a)=\partial\mathcal{V}(a)$ when we  compute $f_{Z^*_{\boldsymbol{V}}}$, which defines planes in term of the $Z$ variable but not necessarily in terms of the Gaussian variable $Y$.

Notice that $\mathbf{y_n}$ is now determined by the rest of the variables. Thus, we can just rewrite $\phi^{-1}\{\phi(a)-\phi(\mathbf{y_1})+\dots+\phi(\mathbf{y_{n-1}})\}=\mathbf{y_n}(a,\mathbf{y_{1}},\dots,\mathbf{y_{n-1}})={y_n}$. Therefore,
\begin{equation}\label{reblock}
f_{Z^*_{\boldsymbol{V}}}(\phi(a))=\int_{\mathcal{S}(a)}\frac{g_{\boldsymbol{\mu}_{SK}}^{\mathbf{\Sigma}_{SK}}(\mathbf{y_1},\dots,\mathbf{y_{n-1}},{y_n})}{\phi'({y_n})}d\mathbf{y_{n-1}}\dots d\mathbf{y_1}.
\end{equation}

The procedure described is summarized in Figure \ref{fig:Trans}, where we want to compute the sum of two correlated lognormal random variables. The straight lines $z_2=z-z_1$ where integration needs to be done, in the bi-lognormal distribution space, are ``bent'', together with the density itself, being transformed into curved lines $\phi(y_2)=\phi(a)-\phi(y_1)$ in the Gaussian framework. 
\begin{figure}[htbp]
	\centering
	\includegraphics[height=0.43\textwidth,trim={0 0 1.5cm 0},clip]{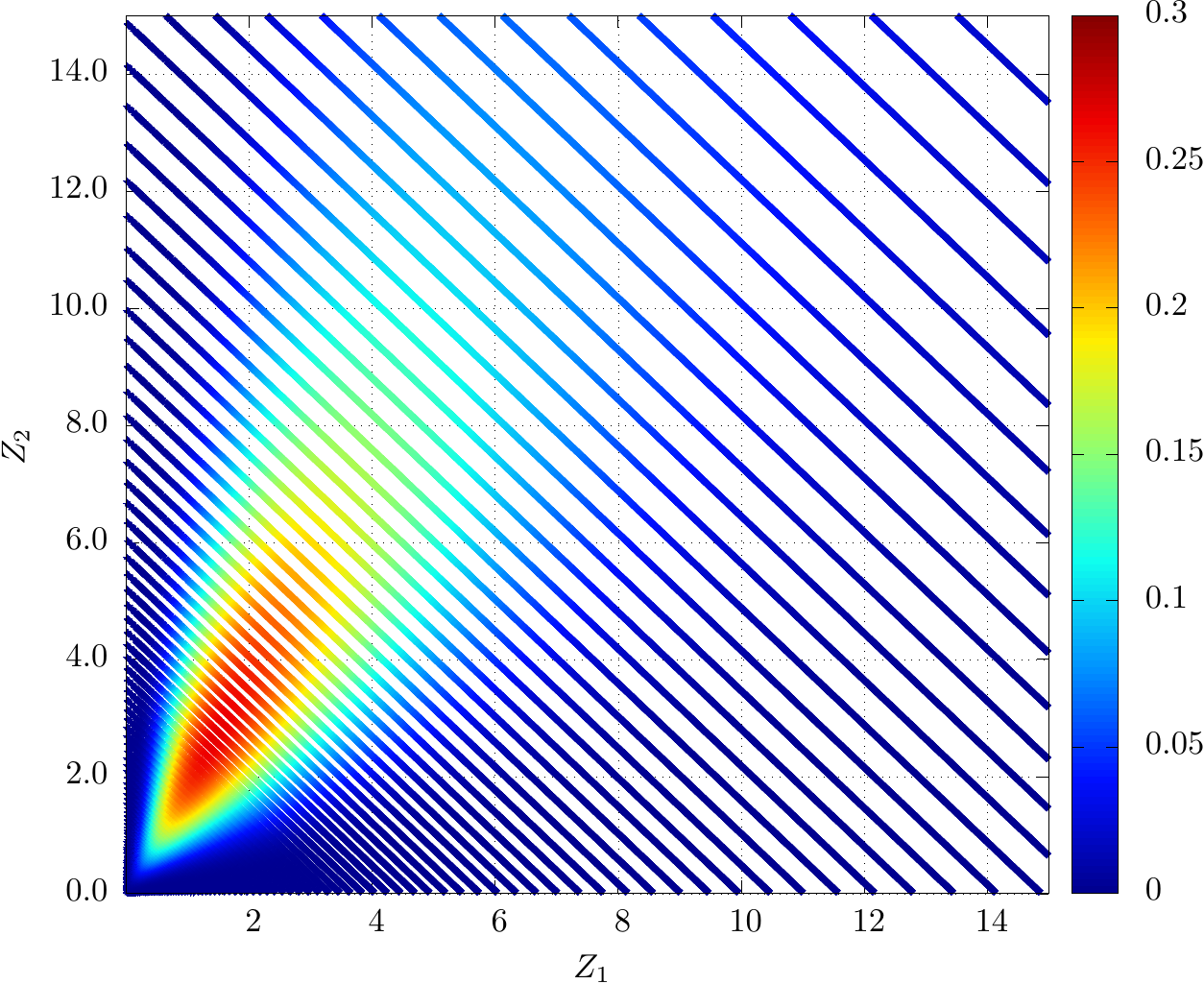}
	\includegraphics[height=0.43\textwidth]{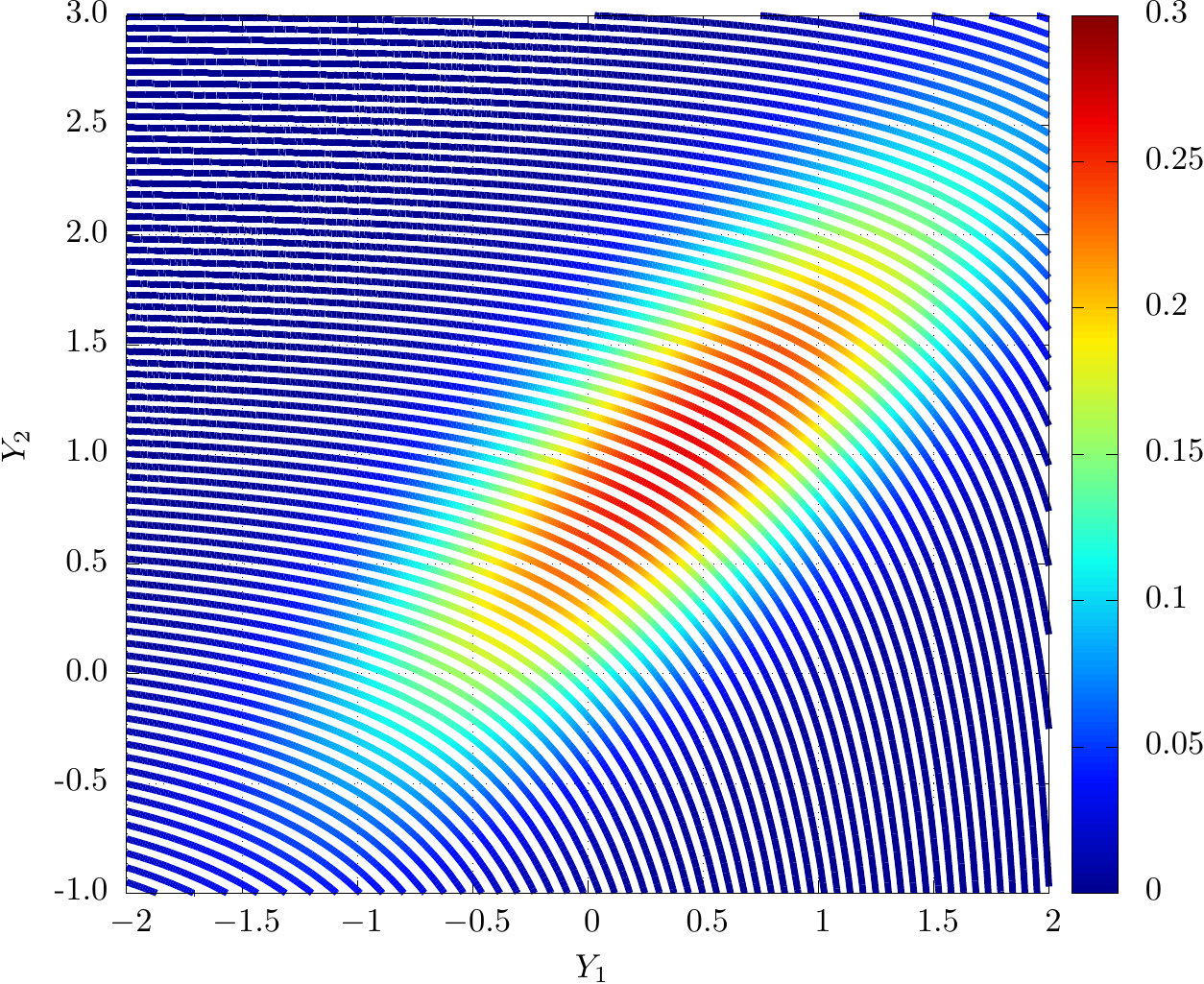}
	\caption{Graphical representation for computing the exact probability density of the sum of two correlated log-normal variables, pointing towards the change of support. Each line represent the domain of integration for certain value of $a$, along the Gaussian density should be computed and integrated to obtain a certain value $f_{Z^*_{\boldsymbol{V}}}(\phi(a))$.}\label{fig:Trans}
\end{figure}

\begin{Rk}
Although this procedure gives exact results, it is unfeasible in practice because of its computational complexity, requiring to integrate the multi-Gaussian density in $\mathcal{S}(a)\subset\R^{n-1}$. However, a highly suitable alternative could be to consider Monte-Carlo integration of the integral \ref{reblock}. One can argue that this would be similar to using geostatistical simulations. It is not, since we can obtain immediately answers such as the probability above a cut-off value $\textit{\textbf{z}}$, $P(Z^*_{\boldsymbol{V}}>\textit{\textbf{z}})$, by computing $f_{Z^*_{\boldsymbol{V}}}(z)$ in a discrete set of points in the interval $(0,\textit{\textbf{z}})$, in contrast with geostatistical simulations, where there is no control in the outcomes range and values.
\end{Rk}

We show an example of reblocking, where the exact distribution of the points and the average of points are presented using expression \ref{reblock}, together with the outcomes of the simulation, at point and block support. We have been able to implement it with $n=4$, in the exponential example of Figure \ref{img:DataResults}. The results are summarized in Figure \ref{fig:Res3}.

\begin{figure}[htbp]
	\centering 
	\includegraphics[align=c,width=0.53\textwidth]{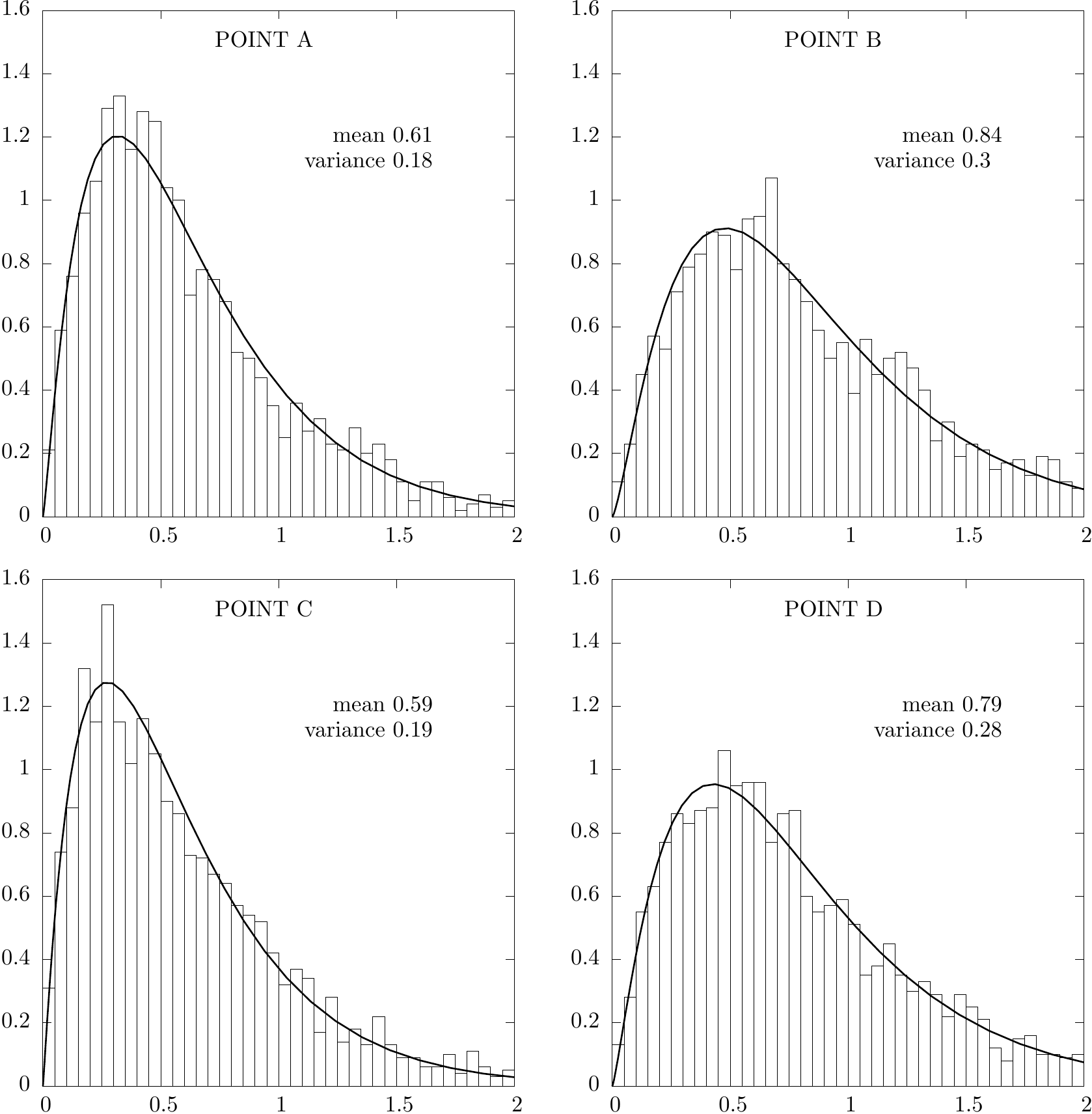}
	\includegraphics[align=c,width=0.46\textwidth]{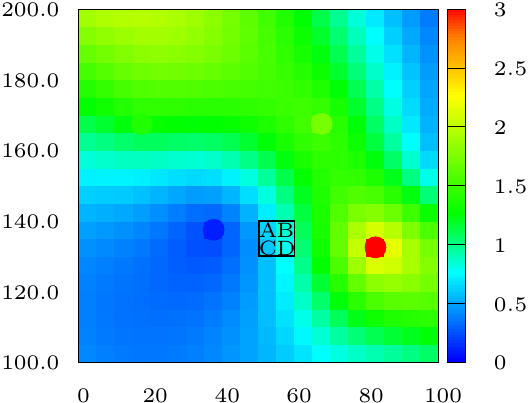}
	\raggedleft
	\includegraphics[align=c,width=0.255\textwidth]{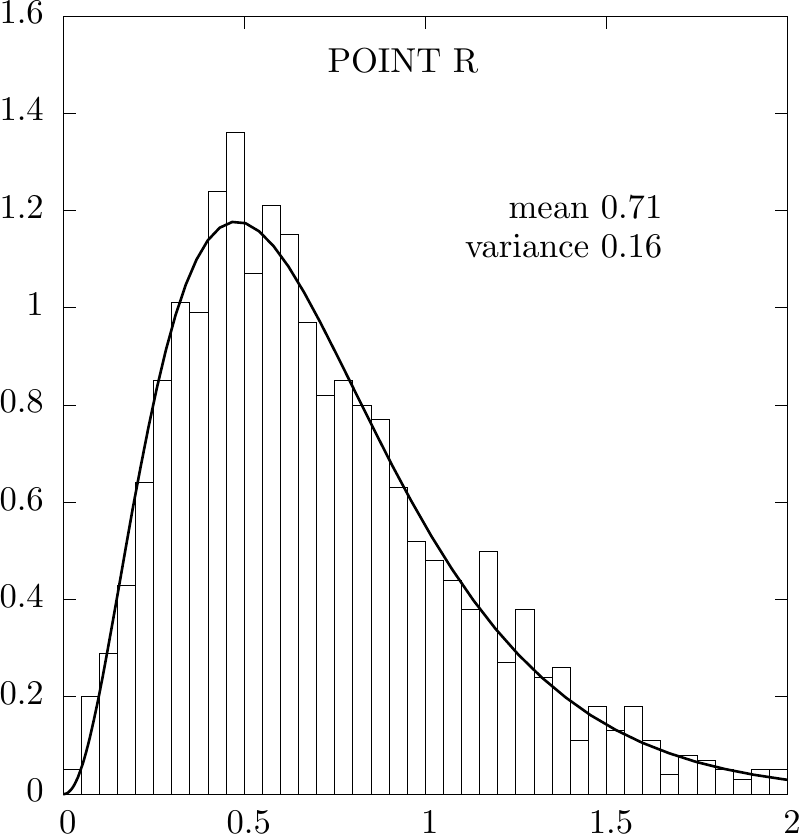}
	\includegraphics[align=c,width=0.46\textwidth]{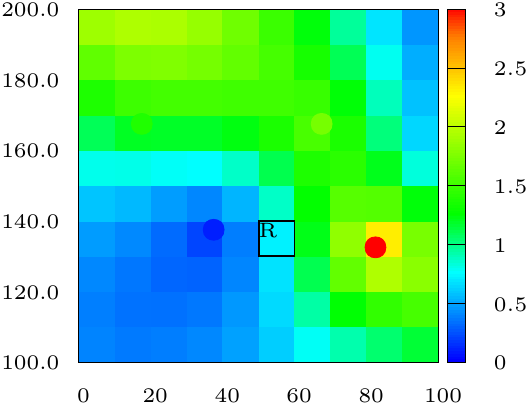}
	\caption{Top: Histogram of 2,000 value scenarios at points A, B, C and D, together with the analytic backtransformed distribution according to eq. \ref{back}, for the exponential case. Bottom:  Histogram of the mean value at points A, B, C and D, in each of the 2,000 scenarios, together with the analytic backtransformed distribution according to eq. \ref{reblock}.}
	\label{fig:Res3}
\end{figure}

\begin{Rk}
If the spatial discretization is infinitesimal, expression \ref{npoints} can be formally expressed as:
\begin{align}\label{extreme}
Z^*_{\boldsymbol{V}}=\int_{\boldsymbol{V}}\rho(\textbf{u})Z^*(\textbf{u})d\textbf{u},
\end{align}
with $\int_{\boldsymbol{V}}\rho(\textbf{u})d\textbf{u}=1$. Supposing we are considering an anamorphosis transformation given by an initial log-normal distribution with mean $0$ and variance $1$, then \ref{extreme} can be expressed as:
$$Z^*_{\boldsymbol{V}}=\int_{\boldsymbol{V}}\rho(\textbf{u})e^{Y^*(\textbf{u})}d\textbf{u},$$
where $Y^*(\textbf{u})$ is the conditioned Gaussian random variable. Some approximations to this form of the problem can be found in \cite{matheron1974effet}.
\end{Rk}

\section{Hermite Polynomials}
\label{sec:Her}

In this section Hermite Polynomials are introduced as an infinite sum alternative (truncated at some term $P$), suitable for avoiding numerical integration in some of the formulations already presented. Firstly, we begin by introducing Hermite Polynomials, and secondly, we present a self-contained list of proofs for most of the important results that we will use in the following section. The proofs are taken from \cite{Lantuejoul1,matheron1974fonctions,andrews2000special,Emery2005a}.

Hermite Polynomials $H_n(Y(\textbf{u}))$ are polynomials that have special properties related to the normal distribution. Hermite Polynomials are defined by Rodrigues' formula:

\begin{equation}
\label{hermite}
H_n(y)=\frac{1}{\sqrt{n!}} \cdot \frac{g^{(n)}(y)}{ g(y)}\qquad \forall n\ge 0
\end{equation}

where $g^{(n)}(y)=\frac{d^n g(y)}{d y^n}$, $\displaystyle\frac{1}{\sqrt{n!}} $ is a normalization factor, $ y $ is a Gaussian
value, and $ g(y) $ is the standard Gaussian probability distribution function (defined in Equation \ref{Gauss}). The Hermite Polynomial $H_n(Y(\textbf{u}))$ is a polynomial of degree $n$. More specifically:

\begin{center}
	\begin{tabular}{c c l}
		$ H_0(y) $&=&$ 1 $\\
		&&\\
		$ H_1(y) $&=&$ -y $\\
		&&\\
		$ H_2(y) $&=&$ \displaystyle \frac{1}{\sqrt{2}}\cdot(y^2-1) $\\
	\end{tabular}
\end{center}

\begin{Thm}
	$ H_n $ is a polynomial of degree $n$ that satisfy the recurrence relations:
	$$\sqrt{(n+1)}\cdot H_{n+1}(y)+yH_{n}(y)+\sqrt{n}\cdot H_{n-1}(y)=0$$
	$$H'_n(y)=-\sqrt{n}\cdot H_{n-1}(y)$$
\end{Thm}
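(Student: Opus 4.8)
The plan is to derive both recurrences from the single differential identity $g'(y) = -y\,g(y)$ satisfied by the standard Gaussian density, combined with Leibniz's rule for differentiating a product. The factorial normalizations in Rodrigues' formula (\ref{hermite}) will then convert the combinatorial factors produced by Leibniz into the square roots appearing in the statement.

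First I would establish an auxiliary recurrence for the derivatives of $g$. Differentiating $g'(y) = -y\,g(y)$ a total of $n$ times via Leibniz's rule, and using that $y$ has vanishing derivatives of order $\ge 2$, gives
$$g^{(n+1)}(y) = -y\,g^{(n)}(y) - n\,g^{(n-1)}(y).$$
Dividing this by $g(y)$ and recalling from (\ref{hermite}) that $g^{(n)}(y)/g(y) = \sqrt{n!}\,H_n(y)$, one obtains
$$\sqrt{(n+1)!}\,H_{n+1}(y) = -y\sqrt{n!}\,H_n(y) - n\sqrt{(n-1)!}\,H_{n-1}(y),$$
and dividing through by $\sqrt{n!}$, noting $n/\sqrt{n} = \sqrt{n}$, yields the first recurrence relation. (The case $n=0$ is harmless since the $H_{n-1}$ term then carries the coefficient $\sqrt{0}$.)

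For the derivative relation, I would differentiate $H_n = \frac{1}{\sqrt{n!}}\,g^{(n)}/g$ directly by the quotient rule, substitute $g' = -yg$ to get $H_n'(y) = \frac{1}{\sqrt{n!}}\,\bigl(g^{(n+1)}(y) + y\,g^{(n)}(y)\bigr)/g(y)$, and then invoke the auxiliary recurrence above, which states precisely that $g^{(n+1)} + y\,g^{(n)} = -n\,g^{(n-1)}$. This collapses to $H_n'(y) = -\frac{n}{\sqrt{n!}}\sqrt{(n-1)!}\,H_{n-1}(y) = -\sqrt{n}\,H_{n-1}(y)$. Finally, the degree claim follows by induction on $n$ from the first recurrence: since $H_0 = 1$ and $H_1 = -y$ have degrees $0$ and $1$, and since in $\sqrt{n+1}\,H_{n+1}(y) = -y\,H_n(y) - \sqrt{n}\,H_{n-1}(y)$ the term $-y\,H_n$ has degree $n+1$ while $-\sqrt{n}\,H_{n-1}$ has strictly smaller degree, the leading term cannot cancel, so $\deg H_{n+1} = n+1$.

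I do not anticipate a genuine obstacle: everything reduces to the Gaussian ODE and Leibniz's rule. The only point requiring care is the bookkeeping of the $1/\sqrt{n!}$ normalization — checking that it interacts with the factor $n$ from Leibniz's rule to produce exactly $\sqrt{n}$ — together with the trivial verification of the edge case $n=0$ so that both identities hold for all $n \ge 0$.
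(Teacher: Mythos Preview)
Your proposal is correct and follows essentially the same route as the paper: both derive the auxiliary identity $g^{(n+1)}+y\,g^{(n)}+n\,g^{(n-1)}=0$ (you via Leibniz on $g'=-yg$, the paper by inspecting the first few derivatives and inducting implicitly), then substitute $g^{(n)}=\sqrt{n!}\,H_n\,g$ to obtain the three-term recurrence, and finally differentiate the relation between $H_n$ and $g^{(n)}/g$ (you as a quotient, the paper as a product) and compare with the auxiliary identity to extract $H_n'=-\sqrt{n}\,H_{n-1}$. Your explicit induction for the degree claim is a welcome addition that the paper leaves tacit.
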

\begin{proof}
	Calculating the successive derivatives of the Gaussian density:
	$$g'(y)+yg(y)=0$$
	$$g''(y)+yg'(y)+g(y)=0$$
	$$g^{(3)}(y)+yg''(y)+2g(y)=0$$	
	and, in general
	\begin{equation}
	\label{hermite1}
	g^{(n+1)}(y)+yg^{(n)}(y)+ng^{(n-1)}=0
	\end{equation}
	Noticing that definition (\ref{hermite}) implies:
	$$g^{(n)}=\sqrt{n!}\cdot g(y)\cdot H_{n}(y),$$
	and plugging this relation in (\ref{hermite1}) (and then dividing by $g(y)$) leads to:
	\begin{equation}
	\label{hermite2}
	\sqrt{(n+1)!} H_{n+1}(y)+y\sqrt{n!}H_{n}(y)+\sqrt{n}\sqrt{n!} H_{n-1}(y)=0.
	\end{equation}
	
	Dividing by $\sqrt{n!}$ leads to the first relation. From here, we obtain a way to generate the Hermite polynomials:
	\begin{equation}
	H_{n+1}(y)=-\frac{1}{\sqrt{n+1}}yH_{n}(y)-\sqrt{\frac{n}{n+1}}H_{n-1}(y) \nonumber
	\end{equation}
	
	On the other hand:
	\begin{align}
	\sqrt{(n+1)!} H_{n+1}(y) g(y)=g^{(n+1)}(y)&=(g^{(n)}(y))'\nonumber\\
	&=(\sqrt{n!} H_{n}(y) g(y))'=\sqrt{n!} H'_{n}(y) g(y)-y\sqrt{n!} H_{n}(y) g(y) \nonumber
	\end{align}
	which gives us a second recurrence relation:
	\begin{align}\label{hermite3}
	\sqrt{(n+1)!} H_{n+1}(y)+y\sqrt{n!} H_{n}(y)-\sqrt{n!} H'_{n}(y)=0
\end{align}
	 Comparing between these two recurrence relations, (\ref{hermite2}) and (\ref{hermite3}), follow.
\end{proof}

\begin{Thm}\label{orth}
	The Hermite polynomials are orthogonal by the Gaussian distribution:
	$$\int_{-\infty}^{\infty}H_n(y)H_p(y)g(y)dy=\begin{cases}
	1\quad\mathtt{ \textit{if} }\quad n=p\\
	0\quad\mathtt{ \textit{if} }\quad n\neq p
	\end{cases}$$
\end{Thm}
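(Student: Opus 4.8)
The plan is to reduce the claim to repeated integration by parts using Rodrigues' formula (\ref{hermite}). From (\ref{hermite}) we have $g^{(n)}(y)=\sqrt{n!}\,H_n(y)g(y)$, so
\[
\int_{-\infty}^{\infty}H_n(y)H_p(y)g(y)\,dy=\frac{1}{\sqrt{n!}}\int_{-\infty}^{\infty}H_p(y)\,g^{(n)}(y)\,dy .
\]
Without loss of generality I would assume $n\ge p$. Then I integrate by parts $n$ times, each time moving one derivative off $g^{(n)}$ and onto the polynomial $H_p$. The boundary terms at $\pm\infty$ all vanish: every $g^{(k)}$ is a polynomial multiple of $g(y)$ (this is immediate from $g'(y)=-yg(y)$ and induction, and is also implicit in the relations of Theorem \ref{orth}'s predecessor), hence decays to $0$ faster than any polynomial grows, so products like $H_p^{(j)}(y)\,g^{(k)}(y)$ tend to $0$. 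After $n$ steps one is left with
\[
\frac{(-1)^{n}}{\sqrt{n!}}\int_{-\infty}^{\infty}H_p^{(n)}(y)\,g(y)\,dy .
\]

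If $n>p$, then $H_p^{(n)}\equiv 0$ since $H_p$ has degree $p$, and the integral is $0$, giving the off-diagonal case. If $n=p$, then $H_n^{(n)}$ is a constant, equal to $n!$ times the leading coefficient of $H_n$. That leading coefficient I would extract from Rodrigues' formula: since $g'(y)=-y\,g(y)$, the top-degree term of $g^{(n)}(y)/g(y)$ is $(-y)^n$, so $H_n(y)=\frac{1}{\sqrt{n!}}\bigl((-1)^n y^n+\text{lower order}\bigr)$, whence $H_n^{(n)}(y)=\frac{(-1)^n}{\sqrt{n!}}\,n!=(-1)^n\sqrt{n!}$. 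Plugging this in gives $\frac{(-1)^n}{\sqrt{n!}}\cdot(-1)^n\sqrt{n!}\int_{-\infty}^{\infty}g(y)\,dy=1$, which is the diagonal case.

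An alternative route is a double induction on $n$ and $p$ using the two recurrence relations of the previous theorem, but the integration-by-parts argument is shorter and more transparent. The only real point requiring care — and the step I would expect to be the main (minor) obstacle — is the justification that every boundary term vanishes, i.e. the uniform decay of $g^{(k)}(y)$ and its polynomial multiples as $|y|\to\infty$, together with the bookkeeping of the leading coefficient of $H_n$; neither is deep, but both must be stated cleanly for the argument to be rigorous.
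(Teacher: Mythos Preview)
Your proof is correct and follows the same core idea as the paper's --- Rodrigues' formula plus repeated integration by parts --- but the execution differs in one instructive way. The paper takes $n\le p$, writes $I_{n,p}=\frac{1}{\sqrt{p!}}\int H_n(y)\,g^{(p)}(y)\,dy$, and at each step uses the recurrence $H_n'=-\sqrt{n}\,H_{n-1}$ from the preceding theorem, so one integration by parts decrements \emph{both} indices simultaneously: after $n$ iterations one reaches $\frac{\sqrt{n!}}{\sqrt{p!}}\int H_0(y)\,g^{(p-n)}(y)\,dy$, which is $\int g=1$ when $n=p$ and $[g^{(p-n-1)}]_{-\infty}^{+\infty}=0$ otherwise. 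Your version moves all $n=\max(n,p)$ derivatives onto the polynomial $H_p$ without invoking the recurrence, at the price of having to extract the leading coefficient of $H_n$ for the diagonal case. Both are standard; the paper's route is marginally slicker on the diagonal (it lands directly on $\int g=1$), while yours is self-contained and does not rely on the previous theorem. Your remark about boundary terms is the one point the paper leaves implicit, so your version is in fact more careful there.
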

\begin{proof}
Without loss of generality suppose $n\leq p$, and lets define
$$I_{n,p}=\int_{-\infty}^{\infty}H_n(y)H_p(y)g(y)dy$$
By the substitution of $H_p(y)g(y)$ with $\displaystyle\frac{g^{(p)}(y)}{\sqrt{p!}}$ we get:
$$I_{n,p}=\frac{1}{\sqrt{p!}}\int_{-\infty}^{\infty}H_n(y)g^{(p)}(y)dy.$$
Integrating by parts gives:
$$I_{n,p}=-\frac{1}{\sqrt{p!}}\int_{-\infty}^{\infty}H'_n(y)g^{(p-1)}(y)dy.$$
By using the recurrence formula:
$$I_{n,p}=\frac{\sqrt{n}}{\sqrt{p!}}\int_{-\infty}^{\infty}H_{n-1}(y)g^{(p-1)}(y)dy=\frac{\sqrt{n}}{\sqrt{p!}}I_{n-1,p-1}.$$
Iterating in the same way $n$ times we get:
$$I_{n,p}=\frac{\sqrt{n!}}{\sqrt{p!}}I_{0,p-n}=\frac{\sqrt{n!}}{\sqrt{p!}}\int_{-\infty}^{\infty}H_{0}(y)g^{(p-n)}(y)dy.$$
We get two cases:\sm

\bul if $n=p$, then $I_{n,p}=\displaystyle\frac{\sqrt{n!}}{\sqrt{n!}}\int_{-\infty}^{\infty}g(y)dy=1$\sm

\bul if $n\neq p$, then $I_{n,p}=\displaystyle\frac{\sqrt{n!}}{\sqrt{p!}}\Big[g^{(p-n-1)}(y)\Big]^{+\infty}_{-\infty}=\displaystyle\frac{\sqrt{n!}}{\sqrt{p!}}\Big[H_{(p-n-1)}(y)g(y)\Big]^{+\infty}_{-\infty}=0$
\sm
\end{proof}

If $Y\sim\mathcal {N}(0 ,1)$, the expressions above can be written in probabilistic terms:
$$\mathbb{E}\{H_n(y)H_p(y)\}=\begin{cases}
1\quad\mathtt{ \textit{if} }\quad n=p\\
0\quad\mathtt{ \textit{if} }\quad n\neq p
\end{cases}$$
where the mean of $H_n(y)$ is:
\begin{equation} \label{meanHe}
\mathbb{E}\{H_n(y)\}=\mathbb{E}\{H_n(y)H_0(y)\}=\begin{cases}
1\quad\mathtt{ \textit{if} }\quad n=0\\
0\quad\mathtt{ \textit{if} }\quad n\neq 0
\end{cases}
\end{equation}
and the variance
$$\mathbb{V}ar\{H_n(y)\}=\begin{cases}
1\quad\mathtt{ \textit{if} }\quad n\neq0\\
0\quad\mathtt{ \textit{if} }\quad n=0
\end{cases}$$
\begin{Thm}
	All function $ \phi $ square integrable by $ g $, say that $\int_{-\infty}^{\infty}\phi^2(y)g(y)dy$, can be expanded into Hermite polynomials:
	$$\phi(y)=\sum_{n=0}^{\infty}\phi_n H_n(y)$$
	where the coefficients $ \phi_n $ are given by
	$$\displaystyle\phi_n=\int_{-\infty}^{\infty}\phi(y)H_n(y)g(y)dy,$$
	Also we have
	\begin{equation}\label{sumpol}
	\int_{-\infty}^{\infty}\phi^2(y)g(y)dy=\sum_{n=0}^{\infty}\phi^2_n.
	\end{equation}
\end{Thm}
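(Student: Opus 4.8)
The claim is precisely that the orthonormal system $\{H_n\}_{n\ge 0}$ from Theorem \ref{orth} is \emph{complete} in the Hilbert space $L^2(g):=L^2(\R,g(y)\,dy)$, after which both displayed identities are standard. The plan is to reduce to this completeness statement: granting orthonormality, Bessel's inequality gives $\sum_{n\ge 0}\phi_n^2\le\int_{-\infty}^{\infty}\phi^2(y)g(y)\,dy<\infty$ for $\phi_n=\int_{-\infty}^{\infty}\phi(y)H_n(y)g(y)\,dy$, so $\sum_n\phi_nH_n$ converges in $L^2(g)$; and the identity $\phi=\sum_n\phi_nH_n$ together with the Parseval relation $(\ref{sumpol})$ are each equivalent to completeness of $\{H_n\}$, i.e. to the single implication: \textbf{if $\phi\in L^2(g)$ satisfies $\phi_n=0$ for all $n$, then $\phi=0$ almost everywhere.} So I would prove just this implication and invoke the abstract theory of orthonormal bases for the rest.

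To prove it, first note that $H_n$ has degree exactly $n$, so $\{H_0,\dots,H_n\}$ is a basis of the polynomials of degree $\le n$; hence ``$\phi_n=0$ for all $n$'' is equivalent to $\int_{-\infty}^{\infty}\phi(y)\,y^k g(y)\,dy=0$ for every $k\ge 0$. Next, $\phi g\in L^1(\R)$ (by Cauchy--Schwarz, $\int|\phi|g\le\|\phi\|_{L^2(g)}(\int g)^{1/2}<\infty$), and more generally $\int_{-\infty}^{\infty}|\phi(y)|\,e^{s|y|}g(y)\,dy\le\|\phi\|_{L^2(g)}\big(\int e^{2s|y|}g(y)\,dy\big)^{1/2}<\infty$ for every $s\ge 0$. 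Consider the Fourier transform $F(t):=\int_{-\infty}^{\infty}\phi(y)e^{ity}g(y)\,dy$ of $\phi g$. Expanding $e^{ity}$ in its power series and integrating term by term — legitimate by Fubini's theorem, since $\int_{-\infty}^{\infty}|\phi(y)|e^{|t||y|}g(y)\,dy<\infty$ — gives $F(t)=\sum_{k\ge 0}\tfrac{(it)^k}{k!}\int_{-\infty}^{\infty}\phi(y)y^k g(y)\,dy=0$ for all $t\in\R$. (Equivalently, the same estimates show $F$ extends to an entire function all of whose Taylor coefficients at $0$ vanish.) Thus the Fourier transform of the integrable function $\phi g$ is identically zero, so $\phi g=0$ a.e. by Fourier uniqueness; since $g(y)>0$ for all $y$, we conclude $\phi=0$ a.e. This establishes completeness and hence the theorem.

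The only genuine content is this completeness step; the expansion and $(\ref{sumpol})$ follow mechanically from the Hilbert-space theory once it is known. Within the step, the points requiring care are the domination/Cauchy--Schwarz estimates used to put $\phi g$ and its exponentially weighted variants in $L^1$ and to justify term-by-term integration, and the passage from ``all moments of $\phi g$ vanish'' to ``$\phi g\equiv 0$'', which genuinely uses the Gaussian decay (a general $L^1$ function need not be determined by its moments) — here this is supplied by the analyticity of $F$, or equivalently by reading $F\equiv 0$ off the convergent power series. An alternative route avoiding Fourier analysis is to prove directly that polynomials are dense in $L^2(g)$ via the Hermite generating function, but the moment argument above is the shortest; I expect the write-up to take the Fourier path.
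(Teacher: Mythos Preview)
Your proof is correct and complete: the reduction to completeness of $\{H_n\}$ in $L^2(g)$, the moment-vanishing reformulation, the Cauchy--Schwarz estimate controlling $\int|\phi(y)|e^{s|y|}g(y)\,dy$, and the Fourier-uniqueness conclusion are all sound and carefully justified.

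The paper, however, does not prove this theorem at all --- immediately after the statement it writes ``We will omit this proof of the expansion into Hermite polynomials.'' So there is no approach to compare against: you have supplied a full argument where the paper supplies none. Your Fourier/moment route is the standard one for establishing completeness of the Hermite system, and it correctly isolates the place where the Gaussian decay is genuinely used (to make the Fourier transform of $\phi g$ entire, or equivalently to justify the term-by-term integration). If anything, your write-up is more than what the paper attempts; the only editorial comment is that the final paragraph discussing alternatives and caveats, while accurate, could be trimmed in a version meant to slot into the paper.
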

We will omit this proof of the expansion into Hermite polynomials. The next two formulas can be obtained by using the Theorem (\ref{orth}).

Let $\lambda$ be any value. Let's consider the Taylor expansion of the function $g(y-\lambda)$ in a point $y$:
$$g(y-\lambda)=\sum_{n=0}^{\infty}\frac{(-\lambda)^n}{n!}g^{(n)}(y)=\sum_{n=0}^{\infty}\frac{(-\lambda)^n}{\sqrt{n!}}H_n(y)g(y).$$

Also:
$$\displaystyle g(y-\lambda)=\frac{1}{\sqrt{2\pi}}e^{-\frac{1}{2}(y-\lambda)^2}=g(y)e^{\lambda y-\frac{\lambda^2}{2}}.$$

Equating both sides gives:
\begin{equation}\label{expexp}
\displaystyle e^{\lambda y}=e^{\lambda^2/2}\sum_{n=0}^{\infty}\frac{(-\lambda)^n}{\sqrt{n!}}H_n(y)
\end{equation}

\begin{Prop}
	\begin{equation}\label{exp1}
	H_n(y)=\sum_{q=0}^{n/2}(-1)^{n-q}\frac{\sqrt{n!}2^{-q}}{(n-2q)!q!}y^{n-2q}
	\end{equation}
\end{Prop}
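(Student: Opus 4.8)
The plan is to derive (\ref{exp1}) directly from the generating function (\ref{expexp}) by expanding both exponential factors as power series in $\lambda$ and matching the coefficient of $\lambda^{n}$ on each side. First I would rewrite (\ref{expexp}) in the equivalent form
\[
e^{\lambda y-\lambda^{2}/2}=\sum_{n=0}^{\infty}\frac{(-\lambda)^{n}}{\sqrt{n!}}\,H_{n}(y),
\]
which is legitimate because the left-hand side is entire in $\lambda$, so the series on the right converges for every $\lambda$ and all the manipulations below are rearrangements of absolutely convergent series.

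Next I would expand the left-hand side as a product of two power series,
\[
e^{\lambda y}=\sum_{k=0}^{\infty}\frac{y^{k}}{k!}\,\lambda^{k},\qquad
e^{-\lambda^{2}/2}=\sum_{q=0}^{\infty}\frac{(-1)^{q}}{2^{q}q!}\,\lambda^{2q},
\]
and form their Cauchy product. Collecting all pairs $(k,q)$ with $k+2q=n$ — so that $k=n-2q$ and $q$ runs from $0$ to $\lfloor n/2\rfloor$ — shows that the coefficient of $\lambda^{n}$ in $e^{\lambda y-\lambda^{2}/2}$ equals
\[
\sum_{q=0}^{\lfloor n/2\rfloor}\frac{(-1)^{q}}{2^{q}q!}\cdot\frac{y^{n-2q}}{(n-2q)!}.
\]

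Finally I would match this against the coefficient of $\lambda^{n}$ on the right-hand side, namely $(-1)^{n}H_{n}(y)/\sqrt{n!}$, to obtain
\[
H_{n}(y)=(-1)^{n}\sqrt{n!}\sum_{q=0}^{\lfloor n/2\rfloor}\frac{(-1)^{q}\,2^{-q}}{(n-2q)!\,q!}\,y^{n-2q},
\]
and then simplify $(-1)^{n}(-1)^{q}=(-1)^{n-q}$ to land exactly on (\ref{exp1}). The only point that needs care is the index and sign bookkeeping in the Cauchy product — in particular, noticing that the upper bound $\lfloor n/2\rfloor$ (written $n/2$ in the statement) arises automatically from the constraint $n-2q\ge 0$ — but there is no genuine obstacle here. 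An alternative, slightly more laborious route would be to prove (\ref{exp1}) by induction on $n$, using the recurrence $H_{n+1}(y)=-\tfrac{1}{\sqrt{n+1}}\,yH_{n}(y)-\sqrt{\tfrac{n}{n+1}}\,H_{n-1}(y)$ established earlier, checking the base cases $n=0,1$ and comparing the coefficient of each power $y^{n+1-2q}$ on both sides; I would prefer the generating-function argument for its transparency.
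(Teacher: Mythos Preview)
Your proof is correct and its core is the same as the paper's: expand the generating function $e^{\lambda y-\lambda^{2}/2}=\sum_{n\ge 0}\frac{(-\lambda)^{n}}{\sqrt{n!}}H_{n}(y)$ as a Cauchy product of the two exponential series and equate coefficients of $\lambda^{n}$. The only difference is that the paper first re-derives this generating function via the Fourier integral representation $H_{n}(y)=\frac{(-i)^{n}e^{y^{2}/2}}{\sqrt{2\pi n!}}\int e^{-t^{2}/2}t^{n}e^{-iyt}\,dt$ and a summation over $n$, whereas you simply invoke the already-established identity (\ref{expexp}); your route is therefore shorter and avoids an unnecessary detour.
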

\begin{proof}
	Consider the Fourier transform of the standard Gaussian density:
	\begin{equation}\label{feq1}
		\int_{-\infty}^{\infty}g(t)e^{-iyt}dt=\frac{1}{\sqrt{2\pi}}\int_{-\infty}^{\infty}e^{-\frac{t^2}{2}}e^{-iyt}dt=e^{-\frac{y^2}{2}}
	\end{equation}
The integral can be repeatedly differentiated with respect to $y$. Then we have:
	\begin{equation}\label{feq2}
\frac{d^ne^{-\frac{y^2}{2}}}{dy^n}=\frac{(-i)^n}{\sqrt{2\pi}}\int_{-\infty}^{\infty}e^{-\frac{t^2}{2}}t^ne^{-iyt}dt
\end{equation}
Using the definition (\ref{hermite}) of the Hermite polynomials:
$$H_n(y)=\frac{1}{\sqrt{n!}} \cdot \frac{g^{(n)}(y)}{ g(y)}=\frac{1}{\sqrt{n!}e^{-\frac{y^2}{2}}}\cdot\frac{d^ne^{-\frac{y^2}{2}}}{dy^n},$$
and (\ref{feq2}), Hermite polynomials can be rewritten as:
\begin{equation}\label{feq3}
H_n(y)=\frac{(-i)^ne^{\frac{y^2}{2}}}{\sqrt{2\pi n!}}\int_{-\infty}^{\infty}e^{-\frac{t^2}{2}}t^ne^{-iyt}dt
\end{equation}
Observe that the term $ t^n $ in the integrand suggests that we consider:
$$\sum_{n=0}^{\infty}\frac{H_n(y)r^n}{\sqrt{n!}}=\frac{e^{\frac{y^2}{2}}}{\sqrt{2\pi }}\int_{-\infty}^{\infty}e^{-\frac{t^2}{2}}\Big(\sum_{n=0}^{\infty}\frac{(-irt)^n}{n!}\Big)e^{-iyt}dt=\frac{e^{\frac{y^2}{2}}}{\sqrt{2\pi }}\int_{-\infty}^{\infty}e^{-\frac{t^2}{2}}e^{-i(r+y)t}dt$$
By (\ref{feq1}), this leads to
\begin{equation}\label{feq4}
\sum_{n=0}^{\infty}\frac{H_n(y)r^n}{\sqrt{n!}}=e^{\frac{y^2}{2}}e^{-\frac{(r+y)^2}{2}}=e^{-ry-\frac{r^2}{2}}
\end{equation}
Notice that the exponential can be rewritten as:
$$\sum_{p=0}^{\infty}\frac{(-y)^p}{p!}r^p\sum_{q=0}^{\infty}\frac{{(-1)^{q} r^{2q}}}{\smash{2^q} q!}=\sum_{p=0}^{\infty}\sum_{q=0}^{\infty}\frac{{(-1)^{q} r^{2q+p}}}{\smash{2^q} q!}\frac{(-y)^p}{p!}.$$
By equating both expression and noticing that $n=2q+p$, it leads to
$$H_n(y)=\sum_{q=0}^{n/2}\frac{{(-1)^{q}}\sqrt{n!}(-y)^{n-2q}}{\smash{2^q} q!(n-2q)!}$$
\end{proof}

The Hermite polynomials are not orthonormal for the nonstandard Gaussian
distribution. We will need to compute expressions for shifts and stretches of the polynomials, in terms of non-shifted/non-stretched polynomials.
\begin{Prop}
	\begin{equation}\label{largexp}
		H_p(a+b y)=\sum_{n=0}^{p}\sqrt{C_p^n}(1-b^2)^{(p-n)/2}b^n H_{p-n}\Big(\frac{a}{\sqrt{1-b^2}}\Big)H_n(y)
	\end{equation}
	with $\displaystyle C_p^n=\frac{p!}{n!(p-n)!}$.
\end{Prop}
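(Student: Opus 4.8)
The plan is to read off the identity from the generating function (\ref{feq4}), namely $\sum_{n\ge 0}\frac{H_n(y)\,r^n}{\sqrt{n!}}=e^{-ry-r^2/2}$, which holds for all real $r$ and $y$ as an identity between entire functions of $r$. First I would substitute $a+by$ in place of $y$, obtaining
\[
\sum_{p=0}^{\infty}\frac{H_p(a+by)\,r^p}{\sqrt{p!}}=e^{-r(a+by)-r^2/2}=e^{-ra-rby-r^2/2}.
\]

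The key step is to split the exponent so that the right-hand side becomes a product of two copies of the generating function. Writing
\[
-ra-rby-\tfrac{r^2}{2}=\Big(-(rb)\,y-\tfrac{(rb)^2}{2}\Big)+\Big(-ra-\tfrac{r^2(1-b^2)}{2}\Big),
\]
the first bracket is exactly the exponent in (\ref{feq4}) with $r$ replaced by $rb$, so $e^{-(rb)y-(rb)^2/2}=\sum_{n\ge 0}\frac{H_n(y)\,b^n r^n}{\sqrt{n!}}$; the second bracket is the exponent in (\ref{feq4}) with $r$ replaced by $r\sqrt{1-b^2}$ and $y$ replaced by $a/\sqrt{1-b^2}$ (using $(r\sqrt{1-b^2})(a/\sqrt{1-b^2})=ra$ and $(r\sqrt{1-b^2})^2/2=r^2(1-b^2)/2$), so $e^{-ra-r^2(1-b^2)/2}=\sum_{m\ge 0}\frac{H_m(a/\sqrt{1-b^2})\,(1-b^2)^{m/2} r^m}{\sqrt{m!}}$. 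Hence $\sum_p\frac{H_p(a+by)r^p}{\sqrt{p!}}$ equals the product of these two power series in $r$.

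Then I would form the Cauchy product of the two series and equate coefficients of $r^p$. With $m=p-n$ this gives
\[
\frac{H_p(a+by)}{\sqrt{p!}}=\sum_{n=0}^{p}\frac{b^n H_n(y)}{\sqrt{n!}}\cdot\frac{(1-b^2)^{(p-n)/2}\,H_{p-n}\!\big(a/\sqrt{1-b^2}\big)}{\sqrt{(p-n)!}},
\]
and multiplying by $\sqrt{p!}$, together with $\sqrt{p!}/(\sqrt{n!}\,\sqrt{(p-n)!})=\sqrt{C_p^n}$, yields precisely (\ref{largexp}).

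The computation is essentially bookkeeping; the only things needing care are that the exponent splitting is exact (a one-line check) and that the manipulations are legitimate — this is immediate since (\ref{feq4}) is an identity of entire functions of $r$, so the substitutions $r\mapsto rb$ and $r\mapsto r\sqrt{1-b^2}$ and the term-by-term multiplication of the two (absolutely convergent) power series are all valid. One should also note that the statement tacitly assumes $b^2\neq 1$ so that $a/\sqrt{1-b^2}$ makes sense; in the typical application $|b|<1$, while for $b^2>1$ the same argument goes through with a purely imaginary $\sqrt{1-b^2}$, the imaginary contributions cancelling because $H_{p-n}$ has the parity of $p-n$, so that $(1-b^2)^{(p-n)/2}H_{p-n}(a/\sqrt{1-b^2})$ remains real.
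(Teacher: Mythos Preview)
Your proof is correct and, while it shares the paper's starting point (the exponential generating function, appearing in the paper both as (\ref{expexp}) and equivalently as (\ref{feq4})), it reaches the result by a genuinely different and shorter route. The paper writes $e^{\lambda(a+by)}=e^{\lambda a}\cdot e^{\lambda^2(b^2-1)/2}\cdot\big[\text{Hermite series in }y\big]$, expands the first two factors as raw Taylor series in $\lambda$, equates $\lambda^p$-coefficients to obtain a double sum over $n$ and $q$, and then must invoke the explicit polynomial formula (\ref{exp1}) to recognise the inner $q$-sum as $(1-b^2)^{(p-n)/2}H_{p-n}\big(a/\sqrt{1-b^2}\big)$. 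Your key observation is that the residual exponential $e^{-ra-r^2(1-b^2)/2}$ is \emph{already} a Hermite generating function once one sets $\rho=r\sqrt{1-b^2}$ and $x=a/\sqrt{1-b^2}$, so the right-hand side factors directly into a product of two Hermite series; the Cauchy product then delivers (\ref{largexp}) in one line, with no need for Proposition~(\ref{exp1}). What your approach buys is economy and a cleaner conceptual picture (the addition formula as a convolution of generating functions); what the paper's approach buys is that it stays within real quantities throughout and does not require spotting the second generating-function substitution. Your closing remarks on the tacit hypothesis $b^2\neq 1$ and on the real-valuedness for $b^2>1$ via the parity of $H_{p-n}$ are accurate and worth keeping.
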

\begin{proof}
By using (\ref{expexp}), one can write:
\begin{align}
	e^{\lambda(a+by)}&=e^{\lambda^2/2}\sum_{p=0}^{\infty}(-1)^p\frac{\lambda^p}{\sqrt{p!}}H_p(a+by)\label{side1}\\
	e^{\lambda(a+by)}&=e^{\lambda a}e^{(\lambda b)y}\nonumber\\
	&=e^{\lambda a}e^{\lambda^2 b^2/2}\sum_{n=0}^{\infty}\frac{(-\lambda)^n b^n}{\sqrt{n!}}H_n(y)=e^{\lambda a}e^{\lambda^2 (b^2-1+1)/2}\sum_{n=0}^{\infty}\frac{(-\lambda)^n b^n}{\sqrt{n!}}H_n(y)\nonumber\\
	&=\Bigg(\sum_{k=0}^{\infty}\frac{\lambda^k a^k}{k!}\Bigg)\Bigg(e^{\lambda^2/2}\sum_{q=0}^{\infty}\frac{\lambda^{2q}(b^2-1)^q }{\smash{2^q}q!}\Bigg)\Bigg(\sum_{n=0}^{\infty}\frac{(-\lambda)^n b^n}{\sqrt{n!}}H_n(y)\Bigg)\nonumber\\
	&=e^{\lambda^2/2}\sum_{k=0}^{\infty}\sum_{n=0}^{\infty}\sum_{q=0}^{\infty}(-1)^n\frac{a^k b^n \lambda^{n+k+2q}(b^2-1)^q}{k!\smash{2^q}q!\sqrt{n!}}H_n(y).\label{side2}
\end{align}
Equating the coefficients associated with $\lambda$ in (\ref{side1}) and (\ref{side2}), and noticing that $p=n+k+2q$, leads to:
\begin{align}
	H_p(a+by)&=\sum_{n=0}^{p}\sum_{q=0}^{(p-n)/2}(-1)^{p-n}\frac{\sqrt{p!}a^{p-n-2q} b^n (b^2-1)^q}{(p-n-2q)!\smash{2^q}q!\sqrt{n!}}H_n(y)\nonumber\\
	&=\sum_{n=0}\sqrt{\frac{p!}{n!(p-n)!}}a^{p-n} b^n H_n(y)\sum_{q=0}^{(p-n)/2}(-1)^{p-n}\frac{\sqrt{(p-n)!} }{\smash{2^q}(p-n-2q)!q!}\Bigg(-\frac{1-b^2}{a^2}\Bigg)\nonumber
\end{align}
By re-arranging (\ref{exp1}):
$$y^{-k}H_k(y)=\sum_{q=0}^{k/2}\frac{(-1)^{k}\sqrt{k!}}{\smash{2^q} q!(k-2q)!}\Bigg(-\frac{1}{y^2}\Bigg)^q,$$
and plugging $k=p-n$ follows:
\begin{align}
	H_p(a+by)&=\sum_{n=0}\sqrt{\frac{p!}{n!(p-n)!}}a^{p-n} b^n H_n(y)(a/\sqrt{1-b^2})^{-(p-n)}H_{p-n}(a/\sqrt{1-b^2})\nonumber\\
	&=\sum_{n=0}\sqrt{C_p^n}(1-b^2)^{(p-n)/2} b^n H_n(y)H_{p-n}(a/\sqrt{1-b^2})\nonumber
\end{align}
\end{proof}

\section{Application of Hermite Polynomials Expansions in a General Case}
\label{sec:Application}
In this section we show an example, taken from \cite{Emery2005a}, of the application of the results previously shown, when Hermite Polynomials are introduced for the computation of the conditional local mean, appropriate for implementation on any probability distributions presented in earth sciences databases.

The idea, in a first step, is to decompose any transfer function into Hermite Polynomials, in which an expansion up to a degree $P$ is used to give a close approximation to the shape of the function:
\[Z(\textbf{u})=\phi[Y(\textbf{u})]=\sum_{p=0}^{\infty}\phi_p H_n(Y(\textbf{u}))\approx\sum_{p=0}^{P}\phi_p H_p(Y(\textbf{u})).\]
Procedures to calculate the coefficients are derived from the properties already summarized in the previous section, and can be found in \cite{Ortiz2005,wackernagel2013multivariate}. Once the coefficients are computed, $\phi'[Y(\textbf{u})]$ follows directly by using the recurrence relationships:
$$\phi'[Y(\textbf{u})]=-\sum_{p=1}^{\infty}\phi_{p}\sqrt{p} H_{p-1}(Y(\textbf{u}))\approx-\sum_{p=1}^{P}\phi_{p}\sqrt{p} H_{p-1}(Y(\textbf{u})).$$

Then, if we want to calculate $[Z^*(\textbf{u})]$, by plugging the Hermite expansion of the anamorphosis in equation \ref{exp} leads to:
\begin{align*}
[Z^*(\textbf{u})]&=\displaystyle\int_{-\infty}^{\infty}\phi(\sigma_{SK}\cdot\textbf{y}+y^*_{SK}) \cdot g(\textbf{y})d\textbf{y}\\
&=\displaystyle\sum_{p=0}^{\infty}\phi_p\int_{-\infty}^{\infty}H_p(\sigma_{SK}\cdot\textbf{y}+y^*_{SK}) \cdot g(\textbf{y})d\textbf{y}.
\end{align*}
As we see, in order to assess the mean value in the raw variable, Hermite polynomials should be computed in a shifted/stretched version. Thus, we can use equation \ref{largexp}, which leads to:
$$[Z^*(\textbf{u})]=\displaystyle\sum_{p=0}^{\infty}\phi_p\sum_{n=0}^{p}\sqrt{C_p^n}(1-\sigma_{SK}^2)^{(p-n)/2}\sigma_{SK}^n H_{p-n}\Bigg(\frac{y^*_{SK}}{\sqrt{1-\sigma_{SK}^2}}\Bigg)\int_{-\infty}^{\infty}H_n(\textbf{y}) \cdot g(\textbf{y})d\textbf{y}.$$

Since the Hermite polynomials have a zero mean except the first one (equation \ref{meanHe}),
all the terms in the sum vanish except the one associated with $n = 0$, reducing the computation of the mean to:
$$[Z^*(\textbf{u})]=\displaystyle\sum_{p=0}^{\infty}\phi_p(1-\sigma_{SK}^2(\textbf{u}))^{p/2} H_{p}\Bigg(\frac{y^*_{SK}(\textbf{u})}{\sqrt{1-\sigma_{SK}^2(\textbf{u})}}\Bigg).$$

\section{Conclusion}
\label{sec:Conclusion}
The aim of the study was to develop a methodology that allow us the
quantification of the uncertainty in an arbitrary volume conditioned by sampling
data, without the use of the traditional geostatistical simulation, which is time
consuming and hard to manage, specially for large grid sizes. We have accomplished this objective, concluding with a tool suitable for
any probabilistic distribution of the sample data.

For this, we have proceeded with an extension of the traditional multi-Gaussian model, allowing us to obtain the formulations needed, and making explicit the dependence of the uncertainty from the grades, the spatial correlation and conditioning data. Only a Kriging of the Gaussian values is needed to obtain conditional local means and variances.

Also we were able to obtain complete local distributions at any support, in an easy and straightforward way, although with the trade-off cost of resorting to an approximate version of the distribution when several number of points are considered for the change of support.

Future work should consider this last point as a root of study, improving such approximations, if it is the case; the search of simplified versions of the results obtained, by the use of polynomials; and extending the framework to multiple variables, taking advantage of correlated heterotopic data.

\bibliographystyle{acm}
\bibliography{./uassessmnt}

\newpage
\signar
\signjo

\end{document}